\documentclass[11pt,letterpaper]{article}

\usepackage{latexsym,multirow}
\usepackage{amssymb,amsmath,bm}
\usepackage{mathtools}
\usepackage{graphicx}
\usepackage[color,all,import,arrow]{xy}
\usepackage{enumerate}
\usepackage{booktabs}
\usepackage{enumitem}

\usepackage{natbib}
\usepackage{xcolor}
\usepackage[pdftex, bookmarksopen=true, bookmarksnumbered=true,
pdfstartview=FitH, breaklinks=true, urlbordercolor={0 1 0}, citebordercolor={0 0 1}, colorlinks=true, citecolor=blue!50!black, linkcolor=blue!50!black, urlcolor=blue!50!black]{hyperref}
\usepackage{colortbl}
\usepackage{subfigure}
\usepackage{subcaption}

\usepackage{tikz}
\usepackage{inputenc}
\usetikzlibrary{shapes,arrows,trees,fit,positioning,shapes.misc,calc}

\usepackage{dcolumn}
\newcolumntype{.}{D{.}{.}{-1}}
\newcolumntype{d}[1]{D{.}{.}{#1}}

\usepackage{theorem}
\theoremstyle{plain}
\theoremheaderfont{\scshape}
\newtheorem{assumption}{Assumption}
\newtheorem{example}{Example}
\newtheorem{definition}{Definition}

\newtheorem{proposition}{Proposition}
\newtheorem{theorem}{Theorem}
\newtheorem{remark}{Remark}

\newtheorem{lemma}{Lemma}
\newtheorem{axiom}{Axiom}
\newenvironment{proof}{\paragraph{Proof:}}{\hfill$\square$}

\newcommand{\argmax}{\operatornamewithlimits{argmax}}

\newcommand{\qed}{\hfill \ensuremath{\Box}}

\usepackage{rotating}

\usepackage[compact]{titlesec}

\allowdisplaybreaks

\newcommand\spacingset[1]{\renewcommand{\baselinestretch}%
  {#1}\small\normalsize}

\newcommand{\blind}{0}

\newcommand{\cX}{\mathcal{X}}

\newcommand{\cA}{\mathcal{A}}
\newcommand{\cB}{\mathcal{B}}

\newcommand{\bv}{\bm{v}}
\newcommand{\bx}{\bm{x}}
\newcommand{\bX}{\bm{X}}
\newcommand{\cM}{\mathcal{M}}

\newcommand{\cZ}{\mathcal{Z}}

\newcommand{\cD}{\mathcal{D}}

\newcommand{\cY}{\mathcal{Y}}

\newcommand{\by}{\bm{y}}

\newcommand{\ext}{\mathrm{Ext}}

\newcommand{\E}{\mathbb{E}}
\newcommand{\R}{\mathbb{R}}
\newcommand{\bR}{\mathbb{R}}

\usepackage{dsfont}
\usepackage{caption}
\usepackage{tikz-cd}
\usetikzlibrary{arrows.meta}

\newcommand{\lt}{\left}
\newcommand{\rt}{\right}
\newcommand{\indicator}[1]{\mathds{1}{\left\{ #1 \right\}}}

\newcommand{\cS}{\mathcal{S}}

\newcommand{\indep}{\mbox{$\perp\!\!\!\perp$}}

\begin{document}

\newcommand{\tit}{An Axiomatic Foundation for Decisions with Counterfactual Utility}
%
%
\spacingset{1.25}

\if0\blind

{\title{\bf\tit\thanks{We thank an anonymous reviewer of Harvard IQSS's rapidPeer for posing the question this paper answers. We also thank Isaiah Andrews, Peter Buisseret, Philip Dawid, Andrew Gelman, Amanda Kowalski, Jonas Magdy Mikhaeil, Amit Sawant, Stephen Senn and Mats Julius Stensrud for helpful discussions.}}

\author{Benedikt Koch\thanks{Ph.D. candidate, Department of Statistics, Harvard University. 33 Oxford Street, Cambridge MA 02138, U.S.A. Email: \href{mailto:benedikt_koch@g.harvard.edu}{benedikt\_koch@g.harvard.edu} URL: \href{https://benediktjkoch.github.io}{https://benediktjkoch.github.io}} \and Kosuke
  Imai\thanks{Professor, Department of Government and Department of
    Statistics, Harvard University.  1737 Cambridge Street,
    Institute for Quantitative Social Science, Cambridge MA 02138, U.S.A.
    Email: \href{mailto:imai@harvard.edu}{imai@harvard.edu} URL:
    \href{https://imai.fas.harvard.edu}{https://imai.fas.harvard.edu}} \and Tomasz Strzalecki\thanks{Professor, Department of Economics, Harvard University, 1805 Cambridge Street, Cambridge MA 02138, U.S.A. Email: \href{mailto:tomasz_strzalecki@harvard.edu}{tomasz\_strzalecki@harvard.edu} URL:
    \href{https://tomasz.scholars.harvard.edu}{https://tomasz.scholars.harvard.edu}} }
}

\date{
\today
}

\maketitle

\fi

\if1\blind
\title{\bf \tit}

\maketitle
\fi

\pdfbookmark[1]{Title Page}{Title Page}

\thispagestyle{empty}
\setcounter{page}{0}

\begin{abstract}
  Counterfactual utilities evaluate decisions not only by the realized
  outcome under a given decision, but also by the counterfactual
  outcomes that would arise under alternative decisions. By
  generalizing the standard utility framework, they allow decision-makers
  to encode asymmetric criteria, such as avoiding harm and
  anticipating regret. Recent work, however, has raised fundamental
  concerns about the coherence and transitivity of counterfactual
  utilities. We address these concerns by extending the von Neumann--Morgenstern (vNM) framework to preferences defined on the extended space of all potential outcomes rather than realized outcomes alone. We show that expected counterfactual utility satisfies
  the vNM axioms on this extended domain, thereby admitting a coherent preference representation.
  We further examine how
  counterfactual preferences map onto the realized outcome space
  through menu-dependent and context-dependent projections.  This
  axiomatic framework reconciles apparent inconsistencies highlighted
  by the Russian roulette example in the statistics literature and
  resolves the well-known Allais paradox from behavioral economics. We
  also derive an additional axiom required to reduce counterfactual
  utilities to standard utilities on the same potential outcome space,
  and establish an axiomatic foundation for additive counterfactual
  utilities, which satisfy a necessary and sufficient condition for
  point identification. Our results show that the use of asymmetric 
  counterfactual utility functions is coherent regardless of whether 
  individual potential outcomes are deterministic or stochastic, and 
  that a recent proposal involving stochastic potential outcomes leads 
  to ambiguities and inconsistencies.
  
\bigskip
\noindent {\bf Keywords:}  causal inference, decision theory, stochastic potential outcomes, counterfactual harm, behavioral economics
\end{abstract}


\newpage

\section{Introduction}\label{sec:introduction}

Suppose a physician must decide whether to administer a new,
experimental drug to a patient.  Let $D_i \in \{0,1\}$ denote a binary
treatment indicator, where $D_i = 1$ corresponds to the experimental
drug and $D_i = 0$ to the standard treatment.  We are interested in
patient survival following the treatment decision, denoted by
$Y_i \in \{0,1\}$, where $Y_i = 1$ indicates survival and $Y_i = 0$
indicates death.

An oracle policy would administer the experimental drug only to patients
who benefit from it. In the potential-outcomes framework
\citep{neyman1923application,rubin1974estimating}, this corresponds to
treating patient $i$ if and only if $Y_i(1)=1$ and $Y_i(0)=0$, where
$Y_i(d)$ denotes the outcome that would occur under treatment $d$.
In practice, however, such policies are not directly implementable because of the
fundamental problem of causal inference: we observe only one potential
outcome, $Y_i = Y_i(D_i)$ \citep{holland1986}.

Given this limitation, scholars typically base decisions solely on
realized outcomes and seek to maximize average, rather than
individual-level, causal effects.  Formally, a decision-maker specifies
a utility function $u(d; y_d)$ that depends on a given action $d$ and
its realized outcome $y_d$. We refer to such utilities as
\emph{standard utilities}.  Under this setting, an optimal decision
rule maximizes the expected utility
$V(d; u) := \mathbb{E}[u(d; Y(d))]$
\citep[e.g.,][]{manski2004statistical}.

Recently, a number of scholars have considered \emph{counterfactual
  utilities} that directly incorporate individual causal effects
\citep[e.g.,][]{li2019unit,kallus2022s,
  NEURIPS2022_ebcf1bff, mueller2023personalized,
 christy2024starting, christy2026countingdefiersdesignbasedmodel,
ben-michaelPolicyLearningAsymmetric,  koch2025statisticaldecisiontheorycounterfactual, imai2026triage}. In this
framework, the decision-maker specifies a utility function
$\tilde{u}(d; y_0,y_1)$ that depends on both potential outcomes.
Evaluation is then based on the expected counterfactual utility
$V(d; \tilde{u}) = \mathbb{E}[\tilde{u}(d; Y(0),Y(1))]$
\citep{li2019unit,koch2025statisticaldecisiontheorycounterfactual}. Counterfactual utility fits naturally within \cite{wald1950statistical}'s statistical decision framework, where the state of nature is the joint distribution of the full potential outcome vector.

This formulation allows utility to be defined at the individual level
as a function of both potential outcomes, thereby permitting a variety
of considerations to be explicitly incorporated.  For example,
asymmetric counterfactual utilities can reflect the Hippocratic
principle of ``do no harm’’ by assigning greater disutility to causing
a patient's death through the administration of the experimental drug
than to failing to prevent death by continuing with the standard
treatment
\citep[e.g.,][]{bordley2009hippocratic, NEURIPS2022_ebcf1bff,christy2024starting,ben-michaelPolicyLearningAsymmetric}.

Recently, however, some scholars have raised fundamental questions
about the coherence of counterfactual utilities \citep[e.g.,][]{dawid2023personalised,sarvet2023perspectives}. This paper is motivated by \citet{Gelman_2025} (hereafter, GM), which acknowledges the appeal
of asymmetric counterfactual utilities but raises the concern that
decisions based on them need not maximize survival. Using a Russian
roulette example, GM points out that an optimal rule under an
asymmetric counterfactual utility may select a treatment with a lower
survival probability over one with a higher survival probability.
Based on this observation, GM concludes that counterfactual utilities
can yield nonsensical recommendations, such as money pumps, because
they ``violate the axioms of classical ([von] Neumann--Morgenstern)
decision theory'' (page~2). In
addition, \citet{sawant2026counterfactualharm} argues
that decisions based on counterfactual utilities can violate
transitivity, which is one of the von Neumann--Morgenstein axioms.

In this paper, we establish an axiomatic foundation for counterfactual
statistical decision theory
(Section~\ref{sec:axiomatizing_counterfactual_decisiont_theory}). 
Specifically, we prove the coherence of decision making based on counterfactual utilities by generalizing the vNM \citep{NeumannMorgenstern1944} axioms to the extended space where preferences are defined on the {\it potential outcome space}. We further show that
counterfactual utilities can violate the vNM axioms if preferences
are instead defined over realized outcomes on the {\it realized outcome space}. These results explain the aforementioned controversy in the literature by distinguishing the space over which preferences are defined.
In addition, we directly connect counterfactual and standard decision theories by introducing an additional axiom under which counterfactual utilities collapse to standard utilities on this potential outcome space. 

We also show that counterfactual utility can be 
used descriptively and resolve the well-known Allais paradox in the
behavioral economics literature \citep{allais1953comportement}. This is achieved by generalizing regret theory \citep[e.g.,][]{bell1982regret, loomes1982regret, lanzani2022correlation}. Unlike the original regret theory, our generalization yields transitive preferences which are defined on the potential outcome space.

Furthermore, we derive an axiomatic characterization of {\it additive}
counterfactual utilities (Section~\ref{sec:identification}). A central
criticism of counterfactual utilities is that their expectations are
generally not identifiable \citep[e.g.,][]{dawid2023personalised,sarvet2023perspectives}, as they depend on the joint distribution
of potential outcomes. However, \cite{koch2025statisticaldecisiontheorycounterfactual} has shown that additive counterfactual utilities
depend only on the marginal distributions of potential outcomes and
constitute the largest class of counterfactual utilities whose
expectations are identifiable.  
Thus, this axiomatization establishes a direct link between preferences on the potential outcome space and the point identification of expected counterfactual utility.

We next show that our axiomatic results hold regardless of whether
potential outcomes are treated as deterministic or stochastic
(Section~\ref{sec:spo}). This analysis is motivated by GM’s proposal to extend
counterfactual utilities by taking expectations over stochastic
potential outcomes before evaluating decisions. We first show that this extension is not
unique, giving rise to ambiguity. Second, under mild conditions, every extended counterfactual utility has an equivalent extended standard utility that yields the same decision recommendation, thereby eliminating the asymmetry of counterfactual utilities. We further show that this
extended utility violates the vNM independence axiom on the potential
outcome space and the same vNM violation persists after projecting the
induced preferences onto the realized outcome space.  The only way to eliminate these inconsistencies is to
impose additivity, which yields additive counterfactual utilities.

Finally, to connect counterfactual utilities to the behavioral economics literature, we formalize two projections that map counterfactual utilities into preferences on the realized outcome space
(Section~\ref{sec:mapping_counterfactual_preferences_to_choice}). The
first is the {\it menu-dependent} projection, of which regret
theory is a special case.  \cite{sawant2026counterfactualharm} considered this projection.  We show that under this projection,
counterfactual utilities may violate transitivity. In contrast, the {\it context-dependent} projection always induces
a transitive preference order.

\section{Examples of Asymmetric Counterfactual Utilities} \label{sec:examples}

To motivate our theoretical development in later sections, we first revisit GM’s Russian roulette example. We show that, when placed in a criminal justice context, its conclusion no longer appears nonsensical. This suggests that, like standard utilities, the interpretation of counterfactual utilities depends on the decision setting.  Furthermore, we illustrate the descriptive value of counterfactual utilities using the well-known \cite{allais1953comportement} paradox from behavioral economics. We show that asymmetric counterfactual utilities can explain a common pattern of decision-making observed in practice rather than serving solely as a normative framework.

\subsection{Russian Roulette}\label{sec:russian_roulette}

GM considers the game of Russian roulette, in which one must choose between two lotteries:
\begin{itemize}
\item $R_{1/6}$: survive with probability $5/6$ and die with probability $1/6$,
\item $R_{1/7}$: survive with probability $6/7$ and die with probability $1/7$.
\end{itemize}
In this case, $R_{1/7}$ is clearly preferred, as it offers a higher probability of survival. Let $d=0$ and $d=1$ denote $R_{1/6}$ and $R_{1/7}$, respectively, and let $Y=1$ represent survival and $Y=0$ death. Under the standard utility function $u(d; y_d)=y_d$, expected utility therefore ranks $R_{1/7}$ above $R_{1/6}$, i.e., $V(1; u) > V(0;u)$.

GM introduces the following asymmetric counterfactual utility function, under which choosing a decision that ``saves'' the life under $R_{1/7}$ yields only half the gain of saving the life under $R_{1/6}$, i.e.,
\begin{align}\label{eqn:gm_utility}
    \tilde u(0; y_0, y_1) = \indicator{y_0 > y_1}, \quad \text{and} \quad
    \tilde u(1; y_0, y_1) = 0.5\, \indicator{y_0 < y_1}.
\end{align}
The decision is determined by the expected utility difference $V(1; \tilde u) - V(0; \tilde u)$.
To evaluate this quantity, GM assumes that the two potential outcomes are independent:
\begin{assumption}[Independent Potential Outcomes]\label{ass:independence}
    Y(0) \indep Y(1).
\end{assumption}
Under this assumption, GM shows that
$V(1; \tilde u) - V(0; \tilde u) = -\frac{1}{21} < 0$,
leading to the nonsensical conclusion that $R_{1/6}$ should be preferred to $R_{1/7}$.

We now show that this seemingly nonsensical result becomes reasonable in a different setting. Suppose that a judge at a pretrial hearing must decide whether to release an arrestee on their own recognizance ($d=0$) or impose cash bail ($d=1$), which may result in pretrial detention. The outcome of interest is whether the arrestee is rearrested for a new crime before a trial ($Y=0$) or not ($Y=1$) \citep{imaiExperimentalEvaluationAlgorithmassisted2023,ben-michaelDoesAIHelp2024,imai2026triage}. Under the same numerical specification as in the Russian roulette example, $6/7 \approx 86\%$ of arrestees would not be rearrested under cash bail, compared to $5/6 \approx 83\%$ under release.

Under Assumption~\ref{ass:independence}, cash bail would \emph{prevent} rearrest for $1/7\approx14\%$ of arrestees, while causing $5/42\approx12\%$ of arrestees to be rearrested who otherwise would not have been, imposing an unnecessary burden on them. Thus, although cash bail yields a lower average rearrest rate, a decision-maker may place greater weight on avoiding the harms of coercive state intervention than on realizing its preventive benefits \citep{dobbie2018effects}. This asymmetric tradeoff between individual liberty and public safety has been discussed in the pretrial literature \citep[e.g.][]{stevenson2022pretrial} and recognized by the Supreme Court (see \textit{United States v. Salerno}).  Such a consideration can be naturally encoded by a counterfactual utility of the form given in Equation~\eqref{eqn:gm_utility}.  \citet{mueller2023personalized,10.1093/aje/kwae441} provides a related medical example. 

This example underscores the fact that a utility function represents the preferences of the decision-maker and is therefore depends on the decision setting, in which it is applied. A utility specification that appears reasonable in one setting may seem nonsensical in another.
The same point also applies to standard utilities. For example, consider the standard utility function $u(d;y_d)=y_d-c \cdot d$ where $c$ represents a cost parameter.  This utility function recommends $R_{1/6}$ for a sufficiently large value of $c$.  Although such utilities may be reasonable when the decision is costly, they make little sense in the Russian roulette example.
Moreover, even within the same setting, different decision-makers may adopt distinct utility functions, reflecting heterogeneous ethical, clinical, or practical considerations.

\subsection{The Allais Paradox}

Next, through the well-known
\cite{allais1953comportement} paradox, we illustrate the descriptive value of asymmetric
counterfactual utility.
\cite{kahneman1979prospect} report the results of two experiments. In
the first experiment, subjects are asked to choose one of the
following two lotteries.
\begin{itemize}
\item[$(a_1)$.] Receive \$4000 with $0.8$ probability.
\item[$(b_1)$.] Receive \$3000 with certainty.
\end{itemize}
The authors show that most people choose option~($b_1$) over option~($a_1$).  In the second experiment, subjects are asked to choose between the following slightly modified options,
\begin{itemize}
\item[$(a_2)$.] Receive \$4000 with $0.2$ probability.
\item[$(b_2)$.] Receive \$3000 with $0.25$ probability.
\end{itemize}
The authors find that in this experiment, most people select option~($a_2$) instead of option~($b_2$).

This empirical finding is inconsistent with the standard utility.
To see this, assume that the utility depends only on the monetary value they receive, regardless of which option one chooses, i.e., $u(d; y_d) = u(y_d)$. Without loss of generality, we normalize $u(0) = 0$. Then, the result of the first experiment implies $V(b_1; u) > V(a_1; u) \Leftrightarrow u(3000) > 0.8 u(4000)$. In contrast, the result of the second experiment implies $V(a_2; u) > V(b_2; u) \Leftrightarrow u(3000) < 0.8 u(4000)$. This leads to a contradiction regardless of what values $u(4000)$ and $u(3000)$ take.

The behavioral economics literature has proposed many models to
explain the Allais paradox and other related phenomena (e.g., reflection effect, probabilistic insurance, and preference reversals). One prominent approach is regret theory \citep[e.g.,][]{bell1982regret, loomes1982regret, bikhchandani2011transitive, lanzani2022correlation}. Suppose that a subject experiences ``regret'' when forgoing the option that would have yielded a better outcome, and has ``rejoicing'' when the chosen option is better than the alternative. This idea can be formalized through the following asymmetric counterfactual utility \citep{bell1982regret}
\begin{align}\label{eqn:bell_utility}
    \tilde u_{\mathrm{Bell}}(d; y_0, y_1) = y_d + f_\lambda(y_d - y_{1-d})
    \quad \text{where} \quad
    f_\lambda(r) = 1 - \exp{(-\lambda r)}, \lambda > 0.
\end{align}
If $r > 0$, then $f_\lambda(r) > 0$, which captures rejoicing. If $r < 0$, then $f_\lambda(r) < 0$, representing regret. Because $|f_\lambda(-r)| > |f_\lambda(r)|$ for all $r > 0$, the utility encodes that a loss is evaluated more severely than a gain of the same magnitude. Under the assumption of independent lotteries (i.e., Assumption~\ref{ass:independence}), this asymmetric counterfactual utility can lead to the decisions consistent with the above experimental results for certain parameter values (e.g., $\lambda \geq 0.003$).
In Section~\ref{sec:mapping_counterfactual_preferences_to_choice}, we show that counterfactual decision theory encompasses regret theory as a special case.

\section{Axiomatizing Counterfactual Decision Theory}\label{sec:axiomatizing_counterfactual_decisiont_theory}

The preceding examples illustrate the normative and descriptive value of counterfactual utilities. In this section, we show that counterfactual utilities induce a coherent decision-theoretic framework, which satisfies the von Neumann--Morgenstern axioms \citep{NeumannMorgenstern1944} in the \emph{potential outcome space}. We then introduce an additional axiom under which the framework reduces to standard utilities.


\subsection{Setup}\label{sec:setup}

Consider a setting in which a decision $D \in \cD := \{0,1,\ldots,K-1\}$ is chosen for a unit with potential outcomes $(Y(0),\ldots,Y(K-1)) \in \cY^{\cD}$ and covariates $\bX \in \cX$, where $\cY := \{0,1,\ldots,M-1\}$ and $K,M \ge 2$. We assume $\cX$ is finite. Let $\cZ := \cD \times \cY^{\cD} \times \cX$ denote the {\it potential outcome space}.

Our aim is to evaluate the quality of a (possibly randomized) \emph{policy}
$\pi:\cY^{\cD}\times\cX \to \Delta(\cD)$, where $\Delta(\cD)$ denotes the set of probability distributions on $\cD$. Our framework asks the decision maker to evaluate all policies, including oracle policies that depends on the full vector of potential outcomes, though in practice a policy will only depend on covariates, i.e., $\pi(d; y_0, \ldots, y_{K-1}, \bx) = \pi(d; \bx)$. We use the extended evaluative domain because it allows us to fully express the decision maker's preferences, including counterfactual ones. Under this setting, each policy $\pi$ induces a decision defined by
$$D \mid  Y(0) = y_0, \ldots, Y(K-1) = y_{K-1}, \bX = \bx  \sim \pi(\cdot ; y_0, \ldots, y_{K-1}, \bx).$$

To quantify the consequence of choosing $d$ for a unit with potential outcomes $\by=(y_0,\ldots,y_{K-1})$ and covariates $\bx$, we specify a \emph{counterfactual utility} function $\tilde u: \cZ \to \R$ and evaluate a policy based on its expectation,
\begin{align}\label{eqn:value_fct}
    V_P(\pi; \tilde u) & := \ \E_{P^\pi}[\tilde u(D; Y(0), \ldots, Y(K-1), \bX)] \notag
    \\
    & \ = \ \sum_{d \in \cD} \sum_{\by \in \cY^\cD} \sum_{\bx \in \cX} \tilde u(d; \by, \bx) \cdot P^\pi(D = d, Y(0) = y_0, \ldots, Y(K-1) = y_{K-1}, \bX = \bx) \notag
    \\
    & \ = \ \sum_{d \in \cD} \sum_{\by \in \cY^\cD} \sum_{\bx \in \cX} \tilde u(d; \by, \bx) \cdot \pi(d; \by, \bx) \cdot P(Y(0) = y_0, \ldots, Y(K-1) = y_{K-1}, \bX = \bx),
\end{align}
where $P^\pi(d, \by, \bx) = \pi(d; \by, \bx) \cdot P(\by, \bx)$ is a joint law of $Z = (D, Y(0), \ldots, Y(K-1), \bX) \in \cZ$ induced by the policy $\pi$ and the probability measure $P$. The policy $\pi$ determines the conditional distribution of the decision given $(\by, \bx)$, while $P$ specifies the joint distribution of potential outcomes and covariates. We treat $P$ as the unknown \emph{state of nature} \citep[e.g.,][]{wald1950statistical, berger1985statistical, manski2004statistical, STOYE200970, stoye2011axioms, koch2025statisticaldecisiontheorycounterfactual}. In the terminology of classical decision theory, a policy $\pi$ corresponds to an \emph{act} that maps a state of the world $P$ to an outcome distribution $P^\pi$, i.e., $\pi(P) = P^\pi$ \citep[e.g.,][] {stoye2011axioms,stoye2011statistical}.

Let $\Delta(\cZ)$ denote the set of probability measures on $\cZ$. Because we allow for an arbitrary oracle policy, every element of $\Delta(\cZ)$ can be written as $P^\pi$ for some pair $(\pi,P)$. This is interpreted as the distribution induced by deploying policy $\pi$ in state $P$. Given a utility function $\tilde u$, we evaluate $P^\pi$ by its \emph{value} $V_P(\pi;\tilde u)$, which induces a preference relation on $\Delta(\cZ)$.
\begin{definition}[Preference Relation]\label{def:repr}
  Let $P^\pi,Q^\rho \in \Delta(\cZ)$ denote the probability distributions induced by policies $\pi$ and $\rho$ under states of nature $P$ and $Q$, respectively.
  Then, a binary preference relation $\succsim$ on $\Delta(\cZ)$ is represented by a counterfactual utility $\tilde u$ if the value of policy $\pi$ in state $P$ is no less than that of policy $\rho$ in state $Q$, i.e.,
  \[
    P^\pi \succsim Q^\rho
    \iff V_P(\pi;\tilde u) \ge V_Q(\rho;\tilde u).
  \]
\end{definition}
For a given utility $\tilde u$, Definition~\ref{def:repr} states that the decision maker (weakly) prefers deploying $\pi$ in state $P$ over deploying $\rho$ in state $Q$ if and only if the value of the former is no less than that of the latter. In other words, $\tilde u$ ranks policy--state pairs $(\pi,P)$ via $P^\pi$. 

As a special case, we can define preferences over (non-oracle) policies within a fixed state of nature $P \in \Delta(\cY^{\cD}\times\cX)$, i.e.,
\begin{align}\label{eqn:pref_within_state}
    \pi \succsim_{(\tilde u,P)} \rho \iff P^\pi \succsim_{\tilde u} P^\rho.
\end{align}
However, allowing $P$ and $Q$ to differ is useful when comparing policies across distinct populations with different covariate and causal effect distributions. For example, one may wish to compare the implementation of alternative policies across cities when differences in their populations may warrant varying policy choices. If the budget allows the policy to be implemented in only one city, then the decision maker is forced to choose between different states of nature, $P$ and $Q$. Relevant discussions can be found in the literature on external validity and data fusion \citep[e.g.,][]{doi:10.1073/pnas.1510507113,Egami_Hartman_2023}.

Although some comparisons across states are immediate (e.g., a decision maker may prefer a healthier population over a sicker one), preferences across states reveal additional information about the utility function. If a preference relation is defined within a fixed state, many utility-state pairs $(\tilde u, P)$ can rationalize the same ordering in Equation~\eqref{eqn:pref_within_state}. Indeed, if we treat both utility and beliefs as subjective, they may become confounded; a high value may reflect either a high perceived likelihood of certain states or a high utility assigned to them \citep[e.g.,][]{karni1983state,AumannSavage71,rubin1987weak}.  By requiring the decision maker to rank elements in $\Delta(\cZ)$, we can recover a unique counterfactual utility $\tilde u$ (up to a positive affine transformation) that represents these preferences as shown in Theorem~\ref{thm:vnm} below. This yields not only an induced ordering of decisions, but also highlights the decision maker's tradeoffs across decisions and states. We provide further details on within-state comparisons in Appendix~\ref{app:fixed-P}.

In Appendix~\ref{app:unit_state}, we also study an alternative setup, in which the state of nature is taken to be the unit-level stratum and covariates, rather than the population from which the unit is sampled. 
We show that this makes the decision maker Bayesian, with a prior over unit strata. 
Moreover, even if all preference comparisons are observed, utility is identified only up to stratum-dependent additive constants.

We emphasize that counterfactual utility ranks distributions on the potential outcome space $\Delta(\cZ)$, rather than labels $d \in \cD$ themselves. This is consistent with standard vNM decision theory, where decisions carry no intrinsic value apart from the distribution of realized outcomes they induce.

\subsection{Axioms}\label{sec:axioms}

Expected counterfactual utility represents expected utility on the potential outcome space $\cZ$. Hence, the induced preferences are characterized by the von Neumann--Morgenstern (vNM) axioms \citep{NeumannMorgenstern1944}. We review these axioms and show how they apply to counterfactual utility.

Let $\succsim$ denote a preference relation on $\Delta(\cZ)$. For $p,q \in \Delta(\cZ)$, write $p \succ q$ if $p \succsim q$ but not $q \succsim p$, and write $p \sim q$ if both $p \succsim q$ and $q \succsim p$.
\begin{axiom}[Completeness]\label{axm:completeness}
For all $p, q \in \Delta(\cZ)$, either $p \succsim q$ or $q \succsim p$.
\end{axiom}
Axiom~\ref{axm:completeness} requires that one can compare any pair of distributions. In our setting, this means that we can rank any two policy--state pairs, $(\pi,P)$ and $(\rho,Q)$.

\begin{axiom}[Transitivity]\label{axm:transitivity}
For all $p, q, r \in \Delta(\cZ)$, if $p \succsim q$ and $q \succsim r$, then $p \succsim r$.
\end{axiom}
Axiom~\ref{axm:transitivity} requires that preferences are consistent across multiple options and rules out preference cycles. In our context, if the policy--state pair $(\pi,P)$ is preferred to $(\rho,Q)$ and $(\rho,Q)$ is preferred to $(\sigma,R)$, then $(\pi,P)$ must be preferred to $(\sigma,R)$. Together with Axiom~\ref{axm:completeness}, this axiom implies that $\succsim$ is a weak order.

For the next set of axioms, fix $\alpha \in [0,1]$. For $p,q \in \Delta(\cZ)$, define their convex combination by $\{\alpha p + (1-\alpha) q\}(z) := \alpha p(z) + (1-\alpha) q(z)$, $z \in \cZ$. We can think of this as drawing $Z$ from $p$ with probability $\alpha$ and from $q$ with probability $1-\alpha$, using this mixture to obtain an element of $\Delta(\cZ)$.

\begin{axiom}[Independence]\label{axm:independence}
For all $p, q,  r \in \Delta(\cZ)$ and $\alpha \in (0, 1]$, if $p \succ q$ then $\alpha p + (1 - \alpha) r \succ \alpha q + (1 - \alpha) r$.
\end{axiom}
Axiom~\ref{axm:independence} requires that strict preferences are preserved under mixing with a common third option. In our setting, $(\pi,P)$ is strictly preferred to $(\rho,Q)$, then mixing $(\pi,P)$ and $(\sigma,R)$ with with probabilities $\alpha$ and $1-\alpha$, respectively, is strictly preferred to the mixture that selects $(\rho,Q)$ with probability $\alpha$ and $(\sigma,R)$ with probability $1-\alpha$.

\begin{axiom}[Continuity]\label{axm:continuity}
For all $p, q,  r \in \Delta(\cZ)$, if $p \succ q \succ r$ then there exists $\alpha, \beta \in (0, 1)$ such that $\alpha p + (1 - \alpha) r \succ q \succ \beta p + (1-\beta) r$.
\end{axiom}
Axiom~\ref{axm:continuity} assumes that preferences are continuous so that they can be separated by sufficiently small deviations in probability.

The following theorem is an immediate application of the vNM theorem, applied to the potential outcome space $\cZ$ rather than the space of realized outcome $Y = Y(D) \in \cY$. It implies that preferences defined over both realized and counterfactual outcomes can be coherent. Modern proofs can be found in \cite{kreps1988notes} and \cite{Gilboa}.
\begin{theorem}[von Neumann and Morgenstern for Counterfactual Utility]\label{thm:vnm}
The following are equivalent:
\begin{enumerate}
    \item $\succsim$ satisfies Axioms~\ref{axm:completeness}--\ref{axm:continuity};
    \item there exists a counterfactual utility $\tilde u:\cZ\to\R$ that represents $\succsim$ in the sense of Definition~\ref{def:repr}.
\end{enumerate}
Moreover, any such utility function $\tilde u$ is unique up to a positive affine transformation.
\end{theorem}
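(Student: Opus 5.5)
The plan is to reduce the statement to the classical von Neumann--Morgenstern theorem on a finite outcome set. Since $\cD$, $\cY$ and $\cX$ are all finite, $\cZ = \cD \times \cY^{\cD} \times \cX$ is finite, so $\Delta(\cZ)$ is a simplex in $\R^{|\cZ|}$. The argument hinges on one observation: by the last line of Equation~\eqref{eqn:value_fct},
\[
V_P(\pi;\tilde u) \;=\; \sum_{z \in \cZ} \tilde u(z)\, P^\pi(z),
\]
so the value of a policy--state pair is just the expectation of $\tilde u$ under the induced law $P^\pi \in \Delta(\cZ)$. I will also use that every $p \in \Delta(\cZ)$ arises as some $P^\pi$, as noted in the setup. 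Take $P$ to be the marginal of $p$ on $\cY^{\cD}\times\cX$ and $\pi(d;\by,\bx) = p(d,\by,\bx)/P(\by,\bx)$ where this is defined, arbitrary otherwise. With these two facts, Definition~\ref{def:repr} says exactly that $\succsim$ on $\Delta(\cZ)$ has a linear (expected-utility) representation $p \mapsto \sum_z \tilde u(z)p(z)$. Hence (2) is literally the conclusion of the vNM theorem with outcome set $\cZ$.

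For (2)$\Rightarrow$(1), I check the axioms directly from the representation $U(p)=\sum_z \tilde u(z)p(z)$.
\begin{itemize}
\item Completeness and transitivity follow because $\succsim$ is the pullback of the order $\ge$ on $\R$.
\item Independence follows from linearity: $U(\alpha p+(1-\alpha)r) - U(\alpha q+(1-\alpha)r) = \alpha\{U(p)-U(q)\} > 0$ whenever $\alpha>0$ and $U(p)>U(q)$.
\item Continuity follows because $\alpha \mapsto U(\alpha p+(1-\alpha)r)$ is a continuous affine map running from $U(r)$ to $U(p)$, with $U(q)$ strictly between those endpoints. Choosing $\alpha$ near $1$ and $\beta$ near $0$ gives the required strict inequalities.
\end{itemize}

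For (1)$\Rightarrow$(2), I will follow the standard proof (Kreps; Gilboa). I expect this step to be the main one, though it is classical and may simply be cited.
\begin{enumerate}
\item Since $\cZ$ is finite and the degenerate lotteries $\delta_z$ span $\Delta(\cZ)$, use independence to show that there exist a best lottery $\bar p$ and a worst lottery $\underline p$ among the $\delta_z$, and that they bound all of $\Delta(\cZ)$.
\item If $\bar p\sim\underline p$, take $\tilde u$ constant.
\item Otherwise, use independence and continuity to show that $\alpha\mapsto \alpha\bar p+(1-\alpha)\underline p$ is strictly increasing in $\succsim$. Then, for each $p$, there is a unique $U(p)\in[0,1]$ with $p\sim U(p)\bar p+(1-U(p))\underline p$.
\item Show $U$ is affine on $\Delta(\cZ)$ by mixing indifferences, using the version of independence for $\sim$ derived from the strict-preference axiom.
\item Set $\tilde u(z)=U(\delta_z)$. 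Affinity then gives $U(p)=\sum_z\tilde u(z)p(z)$, and the observation above turns this into $V_P(\pi;\tilde u)$.
\end{enumerate}

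Uniqueness follows the usual argument. If $\tilde u$ and $\tilde u'$ both represent $\succsim$, pick $z^*$ and $z_*$ with $\delta_{z^*}$ and $\delta_{z_*}$ best and worst. In the nondegenerate case, each $\delta_z$ is indifferent to $\lambda_z\delta_{z^*}+(1-\lambda_z)\delta_{z_*}$ for a unique $\lambda_z$. Both utilities must then satisfy $\tilde u(z)=\tilde u(z_*)+\lambda_z\{\tilde u(z^*)-\tilde u(z_*)\}$, and likewise for $\tilde u'$, so $\tilde u'=a\tilde u+b$ with $a>0$. In the degenerate case both utilities are constant. Positive affine maps clearly preserve the representation.

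The only genuinely non-standard point is the translation between policy--state pairs and $\Delta(\cZ)$, that is, the surjectivity of $(\pi,P)\mapsto P^\pi$. I will state that explicitly so that preferences over pairs are well defined through $P^\pi$.
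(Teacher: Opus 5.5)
Your proposal is correct and takes essentially the same route as the paper. The paper also treats this theorem as an immediate application of the classical von Neumann--Morgenstern theorem on the finite outcome set $\cZ$, using that every element of $\Delta(\cZ)$ arises as some $P^\pi$ and that $V_P(\pi;\tilde u)$ is the expectation of $\tilde u$ under $P^\pi$; it then simply cites Kreps and Gilboa for the standard existence and uniqueness argument that you sketch.
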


\begin{remark}[Realized vs. Potential Outcome Spaces]\label{rem:realized_potential_space}
  At first glance, Theorem~\ref{thm:vnm} appears to contradict the following two claims in the literature: (1) counterfactual utility can result in nonsensical decision
  recommendations such as money pumps because ``it violates the axioms
  of classical (von Neumann-Morgenstern) decision theory''
  \citep[page~2]{Gelman_2025}, and (2) counterfactual utility can be intransitive
  \citep{sawant2026counterfactualharm}. This apparent
  conflict is resolved by distinguishing the space on which
  preferences are defined. As formally shown in
  Section~\ref{sec:mapping_counterfactual_preferences_to_choice}, the
  violation of vNM axioms can occur when one interprets the induced
  preferences over the realized outcome space rather than the potential outcome space.

By contrast, the counterfactual decision theory framework considered
here and elsewhere \citep[e.g.,][]{li2024unit,
  koch2025statisticaldecisiontheorycounterfactual} defines preferences
on the \emph{potential outcome space} $\cZ$.  That is, decisions are
evaluated with respect to both realized and counterfactual outcomes
$\{Y(k)\}_{k\in\cD}$. Theorem~\ref{thm:vnm}  implies that on this extended space, preferences induced by expected counterfactual utility satisfy the vNM axioms.
\end{remark}

\subsection{Relation to Standard Decision Theory}

Theorem~\ref{thm:vnm} shows that preferences induced by counterfactual utilities satisfy the generalized vNM axioms on the potential outcome space.  We now connect this result to the standard decision theory. Specifically, we characterize preferences induced by standard utilities, which depend only on $(D,Y(D), \bX) \in \cD \times \cY \times \cX$, when viewed as preferences on the potential outcome space $\cZ$. We show that such utilities assume indifference to the {\it counterfactual} outcomes under alternative decisions.  We introduce an axiom that formally characterizes the irrelevance of counterfactual outcomes.

\begin{axiom}[Irrelevance of Counterfactual Outcomes]\label{axm:irrelevance_of_untaken_decisions}
    Let $\delta_z(\cdot)$ denote the Dirac measure at $z$.  For all $p, q \in \Delta(\cZ)$, $d \in \cD$ and $\bx \in \cX$, if
    \begin{enumerate}
        \item $p(D = k) = q(D = k) = \delta_d(k), \quad \forall k \in \cD$,
        \item $p(\bX = \bx') = q(\bX = \bx') = \delta_{\bx}(\bx'), \quad \forall \bx' \in \cX$,
        \item $ p(Y(d) = y) = q(Y(d) = y), \quad \forall y \in \cY$,
    \end{enumerate}
    then $p \sim q$.
\end{axiom}

In our setting, Axiom~\ref{axm:irrelevance_of_untaken_decisions}
implies that a decision maker is indifferent between $(\pi,P)$ and
$(\rho,Q)$ if (1) the two policies always choose the same action $d$,
(2) both states are degenerate at the same covariate value $\bX=\bx$
such that all units have the same covariate values, and (3) the induced marginal distribution of the realized outcome $Y(d)$ coincides. Differences in the distribution of counterfactual outcomes $\{Y(k)\}_{k\neq d}$ are irrelevant under this axiom.
Axiom~\ref{axm:irrelevance_of_untaken_decisions} is weaker than requiring indifference between distributions that have the same joint distribution $(D, Y(D), \bX)$, although, under Axioms~\ref{axm:completeness}–\ref{axm:continuity}, the two conditions are equivalent.

The next result shows that Axioms~\ref{axm:completeness}--\ref{axm:irrelevance_of_untaken_decisions} characterize the preferences induced by standard utilities on the potential outcome space $\cZ$.
\begin{theorem}[Preferences of Standard Utilities on the potential outcome Space]\label{thm:std}
The following are equivalent:
\begin{enumerate}
    \item $\succsim$ satisfies Axioms~\ref{axm:completeness}--\ref{axm:irrelevance_of_untaken_decisions};
    \item there exists a standard utility $u(d; \by, \bx) = u(d; y_d, \bx)$ that represents $\succsim$ in the sense of Definition~\ref{def:repr}.
\end{enumerate}
Moreover, any such utility function $u$ is unique up to a positive affine transformation.
\end{theorem}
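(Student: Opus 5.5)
We will prove both directions by building on Theorem~\ref{thm:vnm}, and then obtain uniqueness from the uniqueness clause of that theorem.

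\emph{(2) $\Rightarrow$ (1).} Suppose a standard utility $u(d;y_d,\bx)$ represents $\succsim$. Viewed as a function on $\cZ$, it is a particular counterfactual utility. Theorem~\ref{thm:vnm} then gives Axioms~\ref{axm:completeness}--\ref{axm:continuity}. For Axiom~\ref{axm:irrelevance_of_untaken_decisions}, take $p,q$ satisfying its three conditions. Then
\[
\E_p[u] = \sum_{y\in\cY} u(d;y,\bx)\, p(Y(d)=y),
\]
because $p$ puts all mass on $D=d$ and $\bX=\bx$, and $u$ depends on $\by$ only through $y_d$. The same formula holds for $q$. Condition~3 makes the two expressions equal, so $p\sim q$.

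\emph{(1) $\Rightarrow$ (2).} By Theorem~\ref{thm:vnm}, there is a counterfactual utility $\tilde u$ representing $\succsim$. We show that $\tilde u(d;\by,\bx)$ depends on $\by$ only through $y_d$.
\begin{itemize}
\item Fix $d$, $\bx$, and two vectors $\by,\by'\in\cY^\cD$ with $y_d=y'_d$.
\item Take the Dirac measures $p=\delta_{(d,\by,\bx)}$ and $q=\delta_{(d,\by',\bx)}$. Both put all mass on $D=d$ and $\bX=\bx$, and both give $Y(d)$ the point mass at $y_d=y'_d$.
\item Axiom~\ref{axm:irrelevance_of_untaken_decisions} then gives $p\sim q$. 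By representation, $\tilde u(d;\by,\bx)=\tilde u(d;\by',\bx)$.
\end{itemize}
So $u(d;y,\bx):=\tilde u(d;\by,\bx)$, for any $\by$ with $y_d=y$, is well defined. It coincides with $\tilde u$ on all of $\cZ$, so its expectation equals $V_P(\pi;\tilde u)$ for every $P^\pi$, and $u$ represents $\succsim$.

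Only Dirac measures are needed here; these lie in $\Delta(\cZ)$ and are realizable as some $P^\pi$ because oracle policies are allowed. The axiom's hypotheses concern only marginals, so the point masses satisfy them trivially. This direction is therefore the crux, but it should not be an obstacle.

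\emph{Uniqueness.} Any two standard utilities representing $\succsim$ are in particular counterfactual utilities representing $\succsim$. By Theorem~\ref{thm:vnm}, they differ by a positive affine transformation $a\tilde u+b$ with $a>0$. Conversely, a positive affine transformation of a standard utility is still a standard utility and still represents $\succsim$. The only subtlety to note is that the uniqueness class is inherited unchanged from Theorem~\ref{thm:vnm}.
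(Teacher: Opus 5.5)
Your proposal is correct and follows essentially the same route as the paper. You establish sufficiency through Theorem~\ref{thm:vnm} plus a direct computation of the expectation to verify Axiom~\ref{axm:irrelevance_of_untaken_decisions}, necessity by applying that axiom to the Dirac measures $\delta_{(d,\by,\bx)}$ and $\delta_{(d,\by',\bx)}$ with $y_d=y'_d$, and uniqueness by inheritance from Theorem~\ref{thm:vnm}.
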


The proof is given in Appendix~\ref{app:std}.
Note that in standard decision theory, a utility often depends only on the realized outcome $Y(D)\in\cY$ (but not the decision), i.e., $\tilde u(d;\by,\bx)=u(y_d)$.
Appendix~\ref{app:outcome_utility} shows that this 
can be axiomatized using the potential outcome space $\cZ$ by strengthening
Axiom~\ref{axm:irrelevance_of_untaken_decisions}. In particular, such a utility implies indifference whenever the distribution of the realized outcome is identical, even if it arises under different decisions. The more general form above allows the decision label $d$ to matter, for example, to accommodate costs associated with the decision.

\subsection{Revisiting the Allais Paradox} \label{sec:revisit_examples_allais}

Armed with the formal axiomatic properties derived above, we now
revisit the Allais paradox described in Section~\ref{sec:examples}. 
The Allais paradox violates the independence axiom when preferences are defined on the realized outcome space $\cY$. Indeed, the first experiment corresponds to the distributions $p_{a_1} = 0.8 \delta_{4000} + 0.2 \delta_{0}$ and $p_{b_1} = \delta_{3000}$, whereas the second experiment corresponds to $p_{a_2} = 0.2 \delta_{4000} + 0.8 \delta_0$ and $p_{b_2} = 0.25 \delta_{3000} + 0.75 \delta_{0}$. Note that
$$
    p_{a_2} \ = \ 0.25 p_{a_1} + 0.75 \delta_0, \quad
    p_{b_2} \ = \ 0.25 p_{b_1} + 0.75 \delta_0.
$$
According to Axiom~\ref{axm:independence} (if restricted to $\Delta(\cY)$), $p_{b_1} \succ p_{a_1}$ would imply $p_{b_2} \succ p_{a_2}$. Yet, the experiment shows the opposite, suggesting that subjects' behavior violates  independence over $\Delta(\cY)$.

However, as we now show, this does not violate independence over $\Delta(\cZ)$. Let $\cD=\{0,1\}$ and $\cY$ be finite such that $\{0,3000,4000\} \subseteq \cY$. Further, let $\delta_{(y_0,y_1)}$ denote the Dirac measure at $(y_0,y_1)\in\cY^2$. Associate $a_1, a_2$ with $d=0$ and $b_1, b_2$ with $d=1$. Under the independent coupling of Assumption~\ref{ass:independence} (other couplings are also possible), the two states of nature over $(Y(0),Y(1))$ are given by,
\[
    P_1 := P_{a_1 b_1} = 0.8\,\delta_{(4000,3000)} + 0.2\,\delta_{(0,3000)},
\]
for the first experiment, and
\[
    P_2 := P_{a_2 b_2}
    = 0.05\,\delta_{(4000,3000)} + 0.15\,\delta_{(4000,0)}
    + 0.2\,\delta_{(0,3000)} + 0.6\,\delta_{(0,0)},
\]
for the second experiment. The induced measures on $\Delta(\cD\times\cY^2)$ are
$q_{a_1}=P_1^0 = \delta_{(d = 0)} P_1$, $q_{b_1}=P_{1}^1 = \delta_{(d = 1)} P_{1}$, $q_{a_2}=P_{2}^0 = \delta_{(d = 0)} P_{2}$, and $q_{b_2}=P_{2}^1 = \delta_{(d = 1)} P_{2}$.
Note that
\begin{align*}
    q_{a_2} &= \frac{1}{16} q_{a_1} + \frac{15}{16}\lt(0.16 \delta_{(d = 0)} \delta_{(4000, 0)} + 0.2 \delta_{(d = 0)} \delta_{(0, 3000)} + 0.64 \delta_{(d = 0)} \delta_{(0, 0)}\rt),
    \\
    q_{b_2} &= \frac{1}{16} q_{b_1} + \frac{15}{16} \lt(0.16 \delta_{(d = 1)} \delta_{(4000, 0)} + 0.2 \delta_{(d=1)} \delta_{(0, 3000)}+ 0.64  \delta_{(d = 1)} \delta_{(0, 0)} \rt).
\end{align*}
Unlike in the standard case, the remainder terms are not identical because $\delta_{(d=0)}\neq\delta_{(d=1)}$. Therefore, Axiom~\ref{axm:independence} (when applied to $\Delta(\cZ)$) does not imply $q_{b_2} \succ q_{a_2}$.

Under the utility function $\tilde u_{\mathrm{Bell}}$ introduced in Equation~\eqref{eqn:bell_utility}, for any $\lambda \ge 0.003$, the induced policy values satisfy the following inequalities,
\[
    V_{P_{1}}(1;\tilde u_{\mathrm{Bell}}) > V_{P_{1}}(0;\tilde u_{\mathrm{Bell}}) > V_{P_{2}}(0;\tilde u_{\mathrm{Bell}}) >  V_{P_{2}}(1;\tilde u_{\mathrm{Bell}}),
\]
implying $q_{b_1} \succ q_{a_1} \succ q_{a_2} \succ q_{b_2}$, which matches the observed choice pattern in the Allais paradox. This ranking indicates a preference of choosing $b_1$ in state $P_{1}$ to choosing $a_1$ in $P_{1}$, to choosing $a_2$ in $P_{2}$, to choosing $b_2$ in $P_{2}$. In particular, each choice is evaluated conditional on the relevant correlation structure encoded by the state.

In the presentation above, subjects face binary menus, so we model the experiment as two separate binary decisions with $\cD=\{0,1\}$. 
One can alternatively model the experiment as a single choice with $\cD=\{0,1,2,3\}$, comparing all lotteries simultaneously. 
Under this interpretation, the paradox can also be resolved, for example with
\begin{align}\label{eqn:bell_utility_extended}
    \tilde u(d; y_0, y_1, y_2, y_3) = y_d + \sum_{d^\prime \neq d} f_\lambda(y_d - y_{d^\prime}),
\end{align}
where $f_\lambda$ is defined in Equation~\eqref{eqn:bell_utility}. Under Assumption~\ref{ass:independence} (with independence among $\{Y(d)\}_{d \in \cD}$), for $\lambda \ge 0.002$, the induced ranking on $\Delta(\cD \times \cY^4)$ again matches the observed pattern with $b_1 \succ a_1 \succ a_2 \succ b_2$ given $P_{a_1 b_1 a_2 b_2}$ consistent with the marginals.

\section{Identification of Counterfactual Utilities} \label{sec:identification}

A central criticism of counterfactual utilities is that their expectations are typically not point identified from the observed data \citep[e.g.,][]{dawid2000causal, dawid2023personalised, sarvet2023perspectives, sarvet2024rejoinderperspectivesharmpersonalized}. While in general $V_P(\pi;\tilde u)$ depends on the joint distribution of the full potential outcome vector, we only observe one potential outcome per unit. We discuss two remedies considered in the literature: partial identification, which derives sharp bounds, and additive counterfactual utilities, whose expectations are point-identifiable. We briefly summarize the former approach and then focus on the latter by characterizing the axioms that are implied by additivity. This result directly links preferences on the potential-outcome space to point identification of expected counterfactual utility.

\subsection{Partial Identification}

Under a partial identification approach, one derives sharp bounds on the unobserved joint distribution of potential outcomes consistent with the observed marginals, and then evaluates each policy by its worst-case value over the resulting identified set \citep{ben-michaelPolicyLearningAsymmetric}.
Imposing additional assumptions (e.g., monotonicity) or leveraging external observational data, can tighten these bounds \citep{li2019unit, li2024unit}.
While always applicable, partial identification can be conservative and may fail to deliver a policy recommendation, a difficulty that is often exacerbated when there are more treatments under consideration.

\subsection{Additive Counterfactual Utilities}\label{sec:add}

An alternative to partial identification is to consider a class of counterfactual utilities whose expectation is identifiable. \citet{koch2025statisticaldecisiontheorycounterfactual} proves that $V_P(\pi, \tilde u)$ is identifiable for non-oracle policies for every state of nature $P$ if and only if $\tilde u$ is additive in potential outcomes, i.e., there exist $u_k: \cD \times \cY \times \cX \to \R$ for $k \in \cD$ such that
\begin{align*}
    \tilde u(d; \by, \bx) = \sum_{k \in \cD} u_k(d; y_k, \bx).
\end{align*}
Due to the additive structure, $V_P(\pi;\tilde u)$ depends only on the marginal distributions of the potential outcomes, and is therefore identifiable under the unconfoundedness of the decision. No other class of utility functions yields an identifiable expectation for every state of nature.

Additivity breaks the dependence structure of potential outcomes and
is thus less expressive (or less general) than fully counterfactual utilities. Unlike
standard utilities, however, it still incorporates counterfactual
outcomes. The next axiom formalizes this point. 
\begin{axiom}[Irrelevance of Counterfactual Correlation]\label{axm:indifference_to_correlation}
    For all $p, q \in \Delta(\cZ)$, $d \in \cD$ and $\bx \in \cX$, if
    \begin{enumerate}
        \item $p(D = k) = q(D = k) = \delta_d(k), \quad \forall k \in \cD$,
        \item $p(\bX = \bx') = q(\bX = \bx') = \delta_{\bx}(\bx'), \quad \forall \bx' \in \cX$,
        \item $p(Y(k) = y) = q(Y(k) = y), \quad \forall k \in \cD, \forall y \in \cY$,
    \end{enumerate}
    then $p \sim q$.
\end{axiom}

Axiom~\ref{axm:indifference_to_correlation} weakens Axiom~\ref{axm:irrelevance_of_untaken_decisions} by strengthening its premise. For the decision maker to be indifferent between distributions, the marginal distributions of \emph{all} potential outcomes, realized or counterfactual, must coincide. In other words, the decision maker cares about counterfactual outcomes, but is indifferent to their dependence structure.

The next theorem establishes that Axioms~\ref{axm:completeness}--\ref{axm:continuity} and~\ref{axm:indifference_to_correlation} characterize the preferences induced by additive utilities on the potential outcome space $\cZ$.  It implies that whenever a decision maker is additionally indifferent to counterfactual correlation, their preferences admit a representation by a point-identified expected counterfactual utility. The proof is given in Appendix~\ref{app:add}.
\begin{theorem}[Preferences of Additive Utilities on the potential outcome Space]\label{thm:add}
The following are equivalent:
\begin{enumerate}
    \item $\succsim$ satisfies Axioms~\ref{axm:completeness}--\ref{axm:continuity} and~\ref{axm:indifference_to_correlation};
    \item there exists an additive utility $\tilde u(d; \by, \bx) = \sum_{k \in \cD} u_k(d; y_k, \bx)$ that represents $\succsim$ in the sense of Definition~\ref{def:repr}.
\end{enumerate}
Moreover, the utility function $\tilde u$ is unique up to a positive affine transformation.
\end{theorem}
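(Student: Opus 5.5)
We begin with the easy direction, (2) $\Rightarrow$ (1). An additive $\tilde u$ is in particular a counterfactual utility, so Theorem~\ref{thm:vnm} yields Axioms~\ref{axm:completeness}--\ref{axm:continuity}. For Axiom~\ref{axm:indifference_to_correlation}, take $p,q$ that are degenerate at the same $d$ and $\bx$. Then
\[
\E_p[\tilde u] = \sum_{k\in\cD}\sum_{y\in\cY} u_k(d;y,\bx)\, p(Y(k)=y).
\]
This quantity depends only on the marginals of the $Y(k)$ under $p$. Those marginals coincide with the ones under $q$, so $p\sim q$.

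For the converse, (1) $\Rightarrow$ (2), we first apply Theorem~\ref{thm:vnm} to obtain some $\tilde u:\cZ\to\R$ representing $\succsim$, unique up to positive affine transformations. We then fix $(d,\bx)$ and write $g(\by):=\tilde u(d;\by,\bx)$ as a function on $\cY^{\cD}$.

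Axiom~\ref{axm:indifference_to_correlation}, applied to $p=\delta_d\otimes P\otimes\delta_{\bx}$ and $q=\delta_d\otimes Q\otimes\delta_{\bx}$, says that $\E_P[g]=\E_Q[g]$ whenever $P,Q\in\Delta(\cY^{\cD})$ share all one-dimensional marginals. Equivalently, $\sum_{\by} g(\by)\,h(\by)=0$ for every signed measure $h$ in the span of differences $P-Q$ with equal marginals. The key step, and the main obstacle, is to show that this forces $g$ to be additive: $g(\by)=\sum_k u_k(y_k)$.

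We would argue this in one of two ways. The first is a direct linear-algebra argument: the space of signed measures $h$ with total mass zero and all marginals zero is exactly the orthogonal complement of the span of functions of the form $\by\mapsto \phi_k(y_k)$, and that span contains the constants. Every such $h$ can be written as a scaled difference of two probability measures with equal marginals: take $P=\bar P+\epsilon h^+$ and $Q=\bar P+\epsilon h^-$ around the uniform $\bar P$. Hence $g$ is orthogonal to the complement of the additive functions, so $g$ lies in their span.

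The second, more elementary route uses the "swap" measures $\tfrac12(\delta_{\by}+\delta_{\by'})$ versus $\tfrac12(\delta_{\by\wedge_k\by'}+\delta_{\by'\wedge_k \by})$, obtained by exchanging one coordinate. These have the same marginals and give the rectangle identity
\[
g(\by)+g(\by')=g(\tilde\by)+g(\tilde\by').
\]
Iterating this identity coordinate by coordinate from a base point $\bm{0}$ gives $g(\by)=g(\bm 0)+\sum_k[g(y_k e_k)-g(\bm 0)]$, which is explicitly additive. We would present the second route since it is constructive.

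We then define, for each fixed $(d,\bx)$, $u_k(d;y_k,\bx):=g(y_ke_k)-g(\bm 0)$ for $k\ge 1$, and absorb the constant $g(\bm 0)$ into $u_0$. Since $\cD$ and $\cX$ are finite, this yields an additive $\tilde u$ that equals the vNM utility pointwise, so it represents $\succsim$.

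Uniqueness up to a positive affine transformation is inherited from Theorem~\ref{thm:vnm}: any two additive representations are in particular counterfactual utilities representing $\succsim$. We should note that the individual components $u_k$ are not unique, since constants can be shifted between them; only the sum $\tilde u$ is unique.
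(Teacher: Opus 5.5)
Your proposal is correct and matches the paper's proof: the direction from the additive form to the axioms uses the same marginal computation, and your swap-measure route, telescoping from the base point $\bm 0$, is exactly the paper's Fishburn-type lemma (its $P_k$, $Q_k$ are your coordinate-$k$ swaps against a baseline $\by'$), with uniqueness likewise inherited from Theorem~\ref{thm:vnm}. One small slip in your alternative linear-algebra route: $\bar P+\epsilon h^+$ and $\bar P+\epsilon h^-$ have total mass $1+\epsilon\, h^+(\cY^{\cD})$, so you need to normalize them, or simply use $h^+/h^+(\cY^{\cD})$ and $h^-/h^-(\cY^{\cD})$, which are already probability measures with equal marginals.
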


Proposition~1 of \cite{koch2025statisticaldecisiontheorycounterfactual} shows that when the decision is binary (i.e., $K=2$), for every additive counterfactual utility $\tilde u$, there exist a standard utility $u$ and a decision-independent function $h$ such that
\[
    \tilde u(d;\by,\bx) = u(d;y_d,\bx) + h(\by,\bx).
\]
Consequently, for any fixed state of nature $P$, $\tilde u$ and $u$ induce the same ranking over policies $\pi$. However, they need not induce the same preference relation on $\Delta(\cZ)$, since $\E_P[h((Y(0),Y(1)),\bX)]$ may vary with $P$ and thus affects comparison across states. This means that even in the binary decision case and under Axioms~\ref{axm:completeness}--\ref{axm:continuity}, Axiom~\ref{axm:indifference_to_correlation} is strictly weaker than Axiom~\ref{axm:irrelevance_of_untaken_decisions}. Too see this, consider the following additive utility,
\[
    \tilde u(d; \by) = y_0 + y_1,
\]
where $h = \tilde u$ and $u = 0$ in the above formulation. Consider the two states $P((Y(0),Y(1))=(0,1)) = 1$ and $Q((Y(0),Y(1))=(0,0)) = 1$. Let $d = 0$, then $P^0 \succ_{\tilde u} Q^0$. However, since $P^0(Y(0)) = Q^0(Y(0))$, Axiom~\ref{axm:irrelevance_of_untaken_decisions} implies $P^0 \sim Q^0$.

Furthermore, \cite{koch2025statisticaldecisiontheorycounterfactual} shows that such equivalence does not hold when $K\ge 3$ (Proposition~2). In these cases, additive counterfactual utilities are strictly more expressive than standard utilities, and their decision-making cannot, in general, be replicated by any standard utility.

\section{Counterfactual Utilities with Stochastic Potential Outcomes}\label{sec:spo}

We now examine the recent proposal by \cite{Gelman_2025} (GM) that counterfactual utilities be defined in terms of stochastic, rather than deterministic, potential outcomes. As discussed in Remark~\ref{rem:realized_potential_space}, the original motivation for this proposal is no longer relevant if preferences are modeled over the potential outcome space. Nevertheless, we consider it on its own terms as an alternative framework.  

Below, we first explain why the use of stochastic potential outcomes does not alter any of the axiomatic results derived above.  We then show that the key aspect of GM's proposal is the use of {\it extended} counterfactual utilities. We establish that these extended counterfactual utilities are not unique, reduce to standard utilities under a set of mild assumptions, and can violate the vNM independence axiom. 

\subsection{Use of Stochastic Potential Outcomes}

Our analysis so far has considered potential outcomes drawn from a super-population. In the causal inference literature, potential outcomes are commonly assumed to be \emph{fixed} for a given unit, with randomness arising only from sampling \citep{neyman1923application, rubin1974estimating}. This contrasts with \emph{stochastic} potential outcomes, which have an additional source of randomness for each unit \citep{greenland1987interpretation, vanderweele2012stochastic}.
As GM correctly recognizes, both frameworks ``imply the same joint population distribution'' (page~5). Since  our axiomatic results only require the existence of probability distribution over $Z = (D, Y(0), \ldots, Y(K-1), \bX)$, they directly apply to counterfactual decision theory under stochastic potential outcomes. In other words, the use of stochastic potential outcomes does not alter the results derived in Sections~\ref{sec:axiomatizing_counterfactual_decisiont_theory}~and~\ref{sec:identification}.

\subsection{Extended Counterfactual Utilities}\label{sec:setup_soe}

GM's proposal not only uses stochastic potential outcomes but also depends on an extension of counterfactual utilities. Counterfactual utilities discussed so far are mappings from the the decision, the set of potential outcomes, and covariates, i.e.,
\[
\tilde u:\cD\times \cY^\cD \times \cX \to \bR.
\]
Given a state of nature $P\in \Delta(\cY^\cD \times \cX)$ decisions are evaluated by their expected utility $V_P(d; \tilde u)$.

Instead, GM proposes to assign the value $p_d(P) := \E_P[Y(d)]$ to each stochastic potential outcome and then evaluate a decision using an \emph{extended} counterfactual utility that aggregates these values, i.e., $\tilde u^\ext(d; p_0(P), \ldots, p_{K-1}(P))$ (GM does not discuss randomized policies or covariates). Note that even when the outcome is discrete, we have $p_d(P) \in \R$, implying the need to extend the domain of utility. Because this construction depends only on marginal means, the extended counterfactual utility $\tilde u^{\ext}$ is identifiable.

For example, GM proposes to extend the original utility in Equation~\eqref{eqn:gm_utility} to the following utility function,
\begin{align*}
    \tilde u^{\ext}(0; p_0, p_1) = \indicator{p_0 > p_1}, \quad \text{and} \quad
    \tilde u^{\ext}(1; p_0, p_1) = 0.5\, \indicator{p_0 < p_1}.
\end{align*}
As shown by GM, applying this extended counterfactual utility to the Russian roulette example
discussed in Section~\ref{sec:russian_roulette} yields the  recommendation that the lottery with a higher survival probability $R_{1/7}$ should be chosen,
\begin{align}\label{eqn:gm_utility_extension}
    \tilde u^{\ext}(1; p_0,p_1) - \tilde u^{\ext}(0; p_0,p_1)
    = 0.5\,\indicator{p_0<p_1} - \indicator{p_0>p_1}
    = 0.5 > 0.
\end{align}
Since only marginals enter Equation~\eqref{eqn:gm_utility_extension}, Assumption~\ref{ass:independence} is not required for this result.

We can generalize GM's proposal to the following mapping,
\[
\tilde u^{\ext}:\cD \times \left(\Delta(\cY)\right)^\cD \times \cX \to \R.
\]
Given a conditional distribution of potential outcomes $\zeta:\cX \to \left(\Delta(\cY)\right)^\cD$, we can compute, for each covariate value $\bx$, the marginal outcome distribution associated with each decision and input it to $\tilde u^{\ext}$. Then, given a distribution $\mu\in \Delta(\cX)$, we can evaluate a decision rule by integrating $\tilde u^{\ext}$ with respect to $\mu$.  We note that stochastic potential outcomes are not required for this procedure. Indeed, the same exact setup can be applied to deterministic potential outcomes, which have the same marginal distributions, yielding the identical recommendation.

In general, the expectation of $\tilde u^{\ext}$ will be different from that of $\tilde u$, even if $P=\mu\zeta$. In particular, given a fixed $\tilde u$, there exist many ways to define $\tilde u^{\ext}$ so that it coincides with $\tilde u$ on Dirac measures. Next, we will discuss this non-uniqueness problem.

\subsection{Non-Uniqueness}\label{sec:non_unique_extension}

As noted above, GM’s procedure requires extending the domain of the utility function from binary inputs $y_d \in \{0,1\}$ to real-valued quantities $p_d = \E[Y(d)] \in [0,1]$. However, this extension of counterfactual utility is not unique, and alternative extensions may yield different recommendations.

To illustrate this in the Russian roulette example, note that the utility contrast induced by Equation~\eqref{eqn:gm_utility} can be written in two ways,
\begin{align*}
   \tilde u(1; y_0, y_1) - \tilde u(0; y_0, y_1) = 0.5\, \indicator{y_0 < y_1} - \indicator{y_0 > y_1} = 0.5 \, (1-y_0) y_1 - y_0 (1-y_1).
\end{align*}
Although these representations are equivalent on the binary domain, they lead to different extensions and hence can yield opposite recommendations. Specifically, the first representation yields the {\it asymmetric} extension, recommending $R_{1/7}$, as shown in Equation~\eqref{eqn:gm_utility_extension}.
In contrast, the second representation leads to the following {\it product} extension, recommending $R_{1/6}$,
$$
\tilde u^{\ext}_{\mathrm{Prd}}(1; p_0,p_1) - \tilde u^{\ext}_{\mathrm{Prd}}(0; p_0,p_1) = 0.5\,(1-p_0)p_1 - p_0(1-p_1) = -\frac{1}{21}<0.
$$

In fact, there are infinitely many such extensions from $\tilde u:\{0,1\}^3\to\R$ to $\tilde u^\ext:\{0,1\}\times[0,1]^2\to\R$ that are equivalent on the original domain. Each extension implicitly encodes an additional structure. For instance, the product extension is equivalent to making the independence assumption (Assumption~\ref{ass:independence}) within the original framework of Section~\ref{sec:setup}, which is restrictive in many applications.

This issue becomes especially apparent when utilities are specified as payoff
tables rather than as closed-form expressions. In fact,
mathematically, a utility function $\tilde u:\cD\times \cY^\cD \times
\cX \to \bR$ can be represented by a vector of real numbers. For
example, the utility in Equation~\eqref{eqn:gm_utility} might only be
given in the following matrix form:
\[
\tilde u(0;\,y_0,y_1)
=
\lt[
\begin{array}{c|cc}
 & y_1=0 & y_1=1 \\ \hline
y_0=0 & 0 & 0 \\
y_0=1 & 1 & 0
\end{array}
\rt],
\qquad
\tilde u(1;\,y_0,y_1)
=
\lt[
\begin{array}{c|cc}
 & y_1=0 & y_1=1 \\ \hline
y_0=0 & 0 & \tfrac12 \\
y_0=1 & 0 & 0
\end{array}
\rt].
\]
GM's procedure cannot be applied to payoff tables directly. One must first choose a functional form representation consistent with the table. This step can be nontrivial and, as we have seen, is generally non-unique.


\subsection{Reduction to Standard Utility}\label{sec:failing_to_preserve_asymmetry}

The difference in the extended counterfactual utility given in
Equation~\eqref{eqn:gm_utility_extension} reduces to a comparison of
marginal means, i.e., whether $p_1 > p_0$. It therefore yields the
same recommendation as the standard utility $u(d; y_d) = y_d$ under
the framework introduced in Section~\ref{sec:setup}. In particular,
the comparison no longer depends on the joint distribution of
potential outcomes, thereby eliminating the asymmetry that is present in the original counterfactual utility. In Appendix~\ref{app:failing_to_preserve_asymmetry}, we formalize and further generalize this result to arbitrary discrete outcomes.  Specifically, according to Theorem~\ref{thm:gm_reduction_to_std} in the appendix, under mild conditions and in the case of binary decisions, for every extended counterfactual utility, there exists an extended standard utility that yields the same recommendation in every state of nature.

\subsection{Violation of the Independence Axiom}

We next show that the extended counterfactual utilities proposed by GM do not satisfy the generalized vNM independence axiom (Axiom~\ref{axm:independence}) over the potential outcome space. Indeed, preferences satisfy Axiom~\ref{axm:independence} only if $\tilde u^{\ext}$ admits a representation that is affine in the probability measure.
This result does not contradict the claim given in
Section~\ref{sec:failing_to_preserve_asymmetry} that, for a given
extended counterfactual utility $\tilde u^{\ext}$, there exists an
equivalent, extended standard utility that yields the same optimal
policy for any state of nature $P$. The reason is that here we focus on the preference relation over all (possibly non-optimal) decisions and allow comparisons across states.

We begin by noting that an extended counterfactual utility induces the following preference relation over $\Delta(\cZ)$,
\begin{align}\label{eqn:gm_pref_rel}
    P^d \succsim Q^k
    \iff
    \tilde u^{\ext}\!\left(d; p_0(P),\ldots,p_{K-1}(P)\right)
    \ge
    \tilde u^{\ext}\!\left(k; p_0(Q),\ldots,p_{K-1}(Q)\right).
\end{align}
Like GM, we focus on deterministic decisions.
Consider the extension given in Equation~\eqref{eqn:gm_utility_extension}. Let $P,Q \in \Delta(\cY^2)$ satisfy
\[
    P((Y(0),Y(1))=(0,1))=1,
    \qquad
    Q((Y(0),Y(1))=(1,0))=1,
\]
with marginal means $(p_0,p_1)=(0,1)$ and $(q_0,q_1)=(1,0)$. For the fixed decision $d = 1$,
\[
    \tilde u^{\ext}(1; p_0,p_1)=0.5,
    \qquad
    \tilde u^{\ext}(1; q_0,q_1)=0,
\]
and hence $P^1 \succ Q^1$.
Now, set $R=Q$ and consider the following mixtures with the same weight $\alpha=0.4$,
\[
    L^1 := \alpha P^1 + (1-\alpha)R^1 = 0.4 P^1 + 0.6 Q^1,
    \qquad
    M^1 := \alpha Q^1 + (1-\alpha)R^1 = Q^1,
\]
where marginal means are $(l_0,l_1)=(0.6,0.4)$ and $(m_0,m_1)=(1,0)$. Thus, the extended counterfactual utilities are given by,
\[
    \tilde u^{\ext}(1; l_0,l_1)=0,
    \qquad
    \tilde u^{\ext}(1; m_0,m_1)=0,
\]
and therefore $L^1 \sim M^1$. This contradicts Axiom~\ref{axm:independence}, which would require $L^1 \succ M^1$ because $P^1 \succ Q^1$ and $L^1$ and $M^1$ are mixtures of $P^1$ and $Q^1$ with the same $R^1$ and the same weight $\alpha$. 

Thus, the preference relation induced by extended counterfactual utility does not, in general, define a coherent vNM preference relation on the potential outcome space. Moreover, as we show in Section~\ref{sec:mapping_counterfactual_preferences_to_choice}, projecting these preferences onto the realized outcome space, with either the menu-dependent or context-dependent projection produces the same vNM violations as with classical counterfactual utility.

\subsection{Extended Counterfactual Utilities under Additivity}

We can recover coherence on the potential outcome space by imposing additivity on the extended utility. Assume that the extended utility takes the additive form,
\[
    \tilde u^{\ext}(d; p_0,\ldots,p_{K-1})
    = \alpha_d + \sum_{k\in\cD} \beta_{d,k}\,p_k,
\]
for constants $\alpha_d,\beta_{d,k}\in\R$. This can be viewed as a
natural extension of the additive counterfactual utility $ \tilde
u(d;\by)=\alpha_d+\sum_{k\in\cD}\beta_{d,k}\,y_k$ from Section~\ref{sec:add}, though this
transformation is not one-to-one. We have $\tilde u^{\ext}(d;
p_0(P),\ldots,p_{K-1}(P)) =  V_P(d;\tilde u)$ because additive
utilities also depend only on marginal distributions alone.  Thus, as
shown earlier, the preference relation induced by $\tilde u^{\ext}$ satisfies the vNM independence axiom.

We emphasize that this result does not contradict the conclusions of Sections~\ref{sec:add} and~\ref{sec:failing_to_preserve_asymmetry}. As shown above, under binary decisions, additivity reduces the extended counterfactual utilities to standard utilities.  However, when there are more than two decision categories, additive counterfactual utilities no longer reduce to standard utilities and are more expressive.

\section{Projecting Counterfactual Preferences onto Realized Outcome Space}\label{sec:mapping_counterfactual_preferences_to_choice}

In the previous sections, we characterized preferences induced by counterfactual utilities on
the potential outcome space $\cZ$. In behavioral economics and
psychology, preferences are often expressed over lotteries
on the \emph{realized} (monetary) outcome space $\cY$. Therefore, as
briefly mentioned in Remark~\ref{rem:realized_potential_space}, we consider how
to ``project'' preferences from $\Delta(\cZ)$ onto $\Delta(\cY)$ through a counterfactual utility. In what follows, we discuss two such projections, one that is menu-dependent and another that is context-dependent.

\subsection{Setup}

Let $\cD$ be a finite and nonempty index set, where
each $d \in \cD$ corresponds to a lottery $p_d \in \Delta(\cY)$ with $\cY \subseteq \R$. A
\emph{menu} is a nonempty, finite subset $\cA \subseteq \cD$ from
which an individual chooses. Note that the individual is presented
with marginal distributions of potential outcomes, but their correlation is not given. Choice
can be summarized by the following \emph{choice function}
\[
    \chi: 2^{\cD}\setminus\{\emptyset\} \to 2^{\cD}\setminus\{\emptyset\},
\]
with $\chi(\cA)\subseteq \cA$ for all $\cA\subseteq \cD$. For example, if an individual chooses lottery $p_a$ from menu $\cA=\{a,b\}$, then $\chi(\cA)=\{a\}$. For simplicity of exposition, we suppress the covariates, though it is possible for $\chi$ to depend on them.

A special role is played by choice functions that are induced by maximizing a \emph{value} function $V:\cD\to \R$:
\begin{equation}\label{eqn:ind-choice}
\chi(\cA)=\argmax_{d\in \cA} V(d).
\end{equation}
Such choice behavior can be characterized by two axioms \citep{sen1971choice}.

\begin{axiom}[Sen's $\alpha$ condition]\label{axm:sen_alpha}
    If $d\in \cA \subseteq \mathcal{B}$ and $d\in \chi(\mathcal{B})$, then $d\in \chi(\cA)$.
\end{axiom}
Axiom~\ref{axm:sen_alpha} is implied by value maximization in
Equation~\eqref{eqn:ind-choice}, since a choice $d$ selected from menu $\cB$ must have a higher value than any other element of $\cB$, and therefore any element of $\cA \subseteq \cB$. This axiom is also known as independence of irrelevant alternatives (IIA) or Chernoff's condition \citep{chernoff1954rational}.

The second condition is technical and concerns tie-breaking.
\begin{axiom}[Sen's $\beta$ condition]\label{axm:sen_beta}
	If $d, d^\prime \in \chi(\cA)$, $\cA \subseteq \cB$ and $d^\prime \in \chi(\cB)$, then $d\in \chi(\cB)$.
\end{axiom}

\begin{theorem}[Value Maximization]\label{thm:WARP} The following are equivalent:
	\begin{enumerate}
		\item $\chi$ satisfies Axioms~\ref{axm:sen_alpha}~and~\ref{axm:sen_beta};
		\item there exists a value function $V:\cD\to \bR$ that induces $\chi$ in the sense of Equation~\eqref{eqn:ind-choice}.
	\end{enumerate}
\end{theorem}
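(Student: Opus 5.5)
Proving the direction (2) $\Rightarrow$ (1) is a direct check. If $\chi(\cA)=\argmax_{d\in\cA}V(d)$ for every menu, take $d\in\cA\subseteq\cB$ with $d\in\chi(\cB)$. Then $V(d)\ge V(e)$ for all $e\in\cB$, and in particular for all $e\in\cA$, so $d\in\chi(\cA)$; this gives Axiom~\ref{axm:sen_alpha}. For Axiom~\ref{axm:sen_beta}, suppose $d,d'\in\chi(\cA)$. Then $V(d)=V(d')$. If moreover $d'\in\chi(\cB)$, then $V(d)=V(d')=\max_{\cB}V$, and since $d\in\cA\subseteq\cB$ we get $d\in\chi(\cB)$. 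We also note that $\argmax$ over a finite nonempty set is nonempty, so $\chi$ is a well-defined choice function.

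For (1) $\Rightarrow$ (2), I will build a revealed preference relation from binary menus. Define $d\succsim^* e$ iff $d\in\chi(\{d,e\})$. Because $\chi$ is nonempty-valued, $\succsim^*$ is complete (and reflexive). The key lemma is that for every menu $\cA$,
\[
\chi(\cA)=\{d\in\cA: d\succsim^* e \text{ for all } e\in\cA\}.
\]
For the inclusion ``$\subseteq$'', take $d\in\chi(\cA)$ and $e\in\cA$. Applying Axiom~\ref{axm:sen_alpha} with $\{d,e\}\subseteq\cA$ gives $d\in\chi(\{d,e\})$. For ``$\supseteq$'', let $d$ be $\succsim^*$-maximal in $\cA$ and pick any $e\in\chi(\cA)$, which exists. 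By the first inclusion, $e\in\chi(\{d,e\})$, and by hypothesis $d\in\chi(\{d,e\})$. Now apply Axiom~\ref{axm:sen_beta} with the small menu $\{d,e\}\subseteq\cA$ and $e\in\chi(\cA)$; this yields $d\in\chi(\cA)$.

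Next I need transitivity of $\succsim^*$. Suppose $a\succsim^* b$ and $b\succsim^* c$, and let $\cA=\{a,b,c\}$. If $c\in\chi(\cA)$, the lemma gives $c\succsim^* b$, so $b,c\in\chi(\{b,c\})$. Axiom~\ref{axm:sen_beta} with $c\in\chi(\cA)$ then gives $b\in\chi(\cA)$. Hence $b\succsim^* a$ too, so $a,b\in\chi(\{a,b\})$, and Axiom~\ref{axm:sen_beta} again puts $a\in\chi(\cA)$. Otherwise some element of $\{a,b\}$ lies in $\chi(\cA)$. If it is $a$, we are done by the lemma. If it is $b$, then $a,b\in\chi(\{a,b\})$ and Axiom~\ref{axm:sen_beta} gives $a\in\chi(\cA)$. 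In every case $a\in\chi(\cA)$, so $a\succsim^* c$ by the lemma.

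This makes $\succsim^*$ a weak order on the finite set $\cD$ (finite by the setup). It therefore has a numerical representation, for instance
\[
V(d):=\#\{e\in\cD: d\succsim^* e\},
\]
and transitivity ensures $d\succsim^* e\iff V(d)\ge V(e)$. Combining this with the lemma gives $\chi(\cA)=\argmax_{d\in\cA}V(d)$.

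I expect the main obstacle to be the case analysis proving transitivity, and in particular seeing that Axiom~\ref{axm:sen_beta} is exactly what propagates ties from binary menus up to the three-element menu.
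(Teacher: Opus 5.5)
Your proof is correct, and it is essentially the standard revealed-preference argument of \citet{kreps1988notes}, which the paper cites in place of giving its own proof. As there, you define $\succsim^*$ from binary menus, use Sen's $\alpha$ and $\beta$ to show that $\chi(\cA)$ is the set of $\succsim^*$-maximal elements of $\cA$ and that $\succsim^*$ is transitive, and then represent the resulting weak order on the finite set $\cD$ numerically.
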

A proof can be found in \citet{kreps1988notes}. Axioms~\ref{axm:sen_alpha}~and~\ref{axm:sen_beta} together are known as the Weak Axiom of Revealed Preferences (WARP). Under WARP, the revealed preference relation
\[
d \succsim^{\textsc{R}} d^\prime \iff d \in \chi(\{d, d^\prime\})
\]
is complete and transitive and is represented by a value function $V$.

\subsection{Menu-Dependent Projection}\label{subsec:menu_projection}

We first consider the menu-dependent projection of counterfactual
preferences onto $\Delta(\cY)$. This projection can capture regret
behavior \citep{bell1982regret, loomes1982regret,
  lanzani2022correlation} and connects to the confusion about
counterfactual utility discussed in
Remark~\ref{rem:realized_potential_space}. This projection may not satisfy WARP and can induce intransitive revealed preferences.

For each menu $\cA \subseteq \cD$, write $Y(\cA):=(Y(k))_{k\in\cA}$ and choose a counterfactual utility $\tilde u_{\cA}:\cA\times\cY^{\cA}\to\R$, where
$\cY^{\cA}$ is the $\cA$-indexed outcome space. Although only the marginal lotteries are presented to the individual, evaluating a (non-additive) counterfactual utility requires a coupling. Thus, specify a state of nature $P_{\cA}\in\Delta(\cY^{\cA})$ such that each lottery $d\in\cA$ is a marginal of $P_{\cA}$, i.e., $P_{\cA}(Y(d)) = p_d$. Further, let $P_{\cA}^d := \delta_d \cdot P_{\cA} \in \Delta(\cA \times \cY^\cA)$ denote the induced distribution under decision $d \in \cA$.

Define
\[
    \chi^{\textsc{M}}(\cA):=\argmax_{d\in \cA} \E_{P_\cA}[\tilde u_{\cA}(d;Y(\cA))].
\]
Because $(\tilde u_{\cA},P_{\cA})$ may vary with $\cA$, $\chi^{\textsc{M}}$ can violate Sen's $\alpha$ condition (Axiom~\ref{axm:sen_alpha}). As a result, the induced pairwise revealed preference relation on $\Delta(\cY)$, which is defined as follows, may not be transitive,
\begin{align}\label{eqn:bin_rel_2}
    d \succsim^{\textsc{M}} d^\prime
    &\iff
    d \in \chi^{\textsc{M}}(\{d,d^\prime\})
    \notag
    \\
    &\iff P_{\{d, d^\prime\}}^d  \succsim_{\tilde u_{\{d, d^\prime\}}} P_{\{d, d^\prime\}}^{d^\prime}
    \notag
    \\
    &\iff
    \E_{P_{\{d, d^\prime\}}}\lt[\tilde u_{\{d, d^\prime\}}(d; Y(\{d, d^\prime\}))\rt]
    \geq
    \E_{P_{\{d, d^\prime\}}}\lt[\tilde u_{\{d, d^\prime\}}(d'; Y(\{d, d^\prime\}))\rt],
\end{align}
where $\succsim_{\tilde u_{\{d, d^\prime\}}}$ is the underlying preference on the potential-outcome space induced by the menu utility
$\tilde u_{\{d,d'\}}$, while $\succsim^{\textsc{M}}$ compares the indexed marginal lotteries after a coupling has been specified for each binary menu.

Moreover, intransitivity can persist even when the menu size is fixed (so a single, menu-independent utility can be chosen) and a common correlation structure is imposed in the joint probability measure (e.g., independence). The following example illustrates this point.

\begin{example}\label{exp:menu_dep}
    Let $\cY = \{1,2,3,4,5,6,7,8,9\}$ and $\{a, b, c\} \subseteq \cD$ such that
    \[
        p_a=\frac{1}{3}(\delta_2+\delta_4+\delta_9),\qquad
        p_b=\frac{1}{3}(\delta_1+\delta_6+\delta_8),\qquad
        p_c=\frac{1}{3}(\delta_3+\delta_5+\delta_7).
    \]
    For every binary menu $\cA \subseteq \{a, b, c\}$, take the independent coupling $P_{\cA}=\bigotimes_{k\in\cA}p_k$ and let $\tilde u(d; (y_k)_{k \in \cA}) = \indicator{y_d > y_{d^\prime}}$ where $\{d^\prime\} =\cA \setminus \{d\}$. Then, we have,
    \[
        P(Y(a)>Y(b))=\frac{5}{9}>\frac{4}{9}=P(Y(b)>Y(a)),
        \]
        \[
        P(Y(b)>Y(c))=\frac{5}{9}>\frac{4}{9}=P(Y(c)>Y(b)),
        \]
        \[
        P(Y(c)>Y(a))=\frac{5}{9}>\frac{4}{9}=P(Y(a)>Y(c)),
    \]
    leading to,
    \[
    a\succ^{\textsc{M}} b,\qquad b\succ^{\textsc{M}} c,\qquad c\succ^{\textsc{M}} a.
    \]
    Thus, the pairwise revealed preference relation induced by the menu-dependent projection is not transitive. The same counterfactual utility is used in the intransitivity result of \citet{sawant2026counterfactualharm}, which we revisit in Example~\ref{exp:stensrud} below.

    However, the preference relation is transitive on the potential outcome space. Facing three separate binary choice problems we can model them with $\cD=\{0,1\}$, common utility $\tilde u$, and the same independence assumption. Order the pairs as $(a,b)$, $(b,c)$, and $(a,c)$, and write $P_{ab}$, $P_{bc}$, and $P_{ac}$ for the corresponding ordered laws, viewing the first and second alternatives as decisions $0$ and $1$.
Choosing $a$ over $b$ corresponds to $P_{ab}^a=\delta_{(d=0)}P_{ab} \succ \delta_{(d=1)}P_{ab}=P_{ab}^b$, choosing $b$ over $c$ to $P_{bc}^b=\delta_{(d=0)}P_{bc} \succ \delta_{(d=1)}P_{bc}=P_{bc}^c$, and choosing $c$ over $a$ to $P_{ac}^c=\delta_{(d=1)}P_{ac} \succ \delta_{(d=0)}P_{ac}=P_{ac}^a$.  Since each comparison is made within a different state, transitivity is not violated. In fact, we obtain the global ranking $P_{ab}^a \sim P_{bc}^b \sim P_{ac}^c \succ P_{ab}^b \sim P_{bc}^c \sim P_{ac}^a$.
\end{example}

\begin{example}
    The same setup as in Example~\ref{exp:menu_dep} also yields a violation of Axiom~\ref{axm:sen_alpha}. Let $\cA = \{a, c\}$ and $\cB = \{a, b, c\}$. For $\cM \in \{\cA, \cB\}$, take the independence coupling $P_{\cM}=\bigotimes_{k\in\cM}p_k$ and let $\tilde u_{\cM}(d; (y_k)_{k \in \cM}) = \sum_{k\in\cM\setminus\{d\}}\indicator{y_d>y_k} + \indicator{\cM=\cB}\indicator{d=a}$. 
    Then, $\chi^{\textsc{M}}(\cA) = \{c\}$, but $\chi^{\textsc{M}}(\cB) = \{a\}$. Since $a \in \cA$, this violates Sen's $\alpha$ condition.
\end{example}


We now examine when $\succsim^{\textsc{M}}$ is transitive. Fix a counterfactual utility $\tilde u:\{0,1\}\times\cY^2\to\R$ across all binary menus and assume $\tilde u(0;y_0,y_1) =  \tilde u(1;y_1,y_0)$ where $y_0,y_1\in\cY$. This condition ensures that the comparison is invariant to the arbitrary labeling of the two alternatives in a binary menu as $0$ and~$1$.

\begin{proposition}[Transitivity under Menu-dependent Projection]\label{prop:method2_transitive}
    Let $\cY$ be finite and $\tilde u: \{0, 1\} \times \cY^2 \to \R$ be a counterfactual utility with $\tilde u(0; y_0, y_1) = \tilde u(1; y_1, y_0), \, \forall y_0, y_1 \in \cY$. Then, the following are equivalent:
    \begin{enumerate}
        \item For every finite index set $\cD$, every family of marginal lotteries $(p_d)_{d \in \cD}$ and every assignment of couplings $P_{\{d, d^\prime\}} \in \Delta(\cY^{\{d, d^\prime\}})$ to binary menus $\{d, d^\prime\} \subseteq \cD$, such that $P_{\{d, d^\prime\}}(Y(k)) = p_k$ for $k \in \{d, d^\prime\}$, the preference relation $\succsim^{\textsc{M}}$, 
        \[
            d \succsim^{\textsc{M}} d^\prime \iff \E_{P_{\{d, d^\prime\}}}\lt[\tilde u(0; Y(d), Y(d^\prime))\rt]
    \geq
    \E_{P_{\{d, d^\prime\}}}\lt[\tilde u(1; Y(d), Y(d^\prime))\rt],
        \]
        is transitive.
        \item There exists $u: \cY \to \R$ and a symmetric function $h: \cY^2 \to \R$, i.e., $h(x, y) = h(y, x)$, such that
        \[
           \tilde u(d; y_0,y_1) = u(y_d) + h(y_0,y_1).
        \]
    \end{enumerate}
\end{proposition}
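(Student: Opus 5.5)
The plan is to reduce everything to the utility difference
\[
g(y_0,y_1) := \tilde u(0;y_0,y_1)-\tilde u(1;y_0,y_1).
\]
By the labeling-invariance hypothesis $\tilde u(0;y_0,y_1)=\tilde u(1;y_1,y_0)$, $g$ is antisymmetric:
\[
g(y_1,y_0)=\tilde u(1;y_0,y_1)-\tilde u(0;y_0,y_1)=-g(y_0,y_1),
\]
and in particular $g(y,y)=0$. The revealed relation is then $d\succsim^{\textsc{M}} d'$ iff $\E_{P_{\{d,d'\}}}[g(Y(d),Y(d'))]\ge 0$. Both directions will be argued through the structure of $g$. The key fact is the elementary characterization that an antisymmetric $g$ can be written as $g(x,y)=u(x)-u(y)$ if and only if it satisfies the cycle condition
\[
g(x,y)+g(y,z)+g(z,x)=0 \quad\text{for all } x,y,z\in\cY.
\]
For sufficiency, fix $x_0\in\cY$ and set $u(x):=g(x,x_0)$; the cycle condition with $z=x_0$ together with antisymmetry gives $g(x,y)=u(x)-u(y)$.

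\emph{(2)$\Rightarrow$(1).} If $\tilde u(d;y_0,y_1)=u(y_d)+h(y_0,y_1)$ with $h$ symmetric, then $g(y_0,y_1)=u(y_0)-u(y_1)$. Hence for any coupling,
\[
\E_{P_{\{d,d'\}}}[g(Y(d),Y(d'))]=\E_{p_d}[u]-\E_{p_{d'}}[u],
\]
which depends only on the marginals. Therefore $\succsim^{\textsc{M}}$ is represented by the value $V(d)=\E_{p_d}[u]$ and is transitive, whatever the couplings are.

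\emph{(1)$\Rightarrow$(2).} I would argue by contraposition: if the cycle condition fails, I construct an intransitive instance. The cycle sum vanishes automatically whenever two of $x,y,z$ coincide, by antisymmetry. So a failure involves distinct $x,y,z$ with $c:=g(x,y)+g(y,z)+g(z,x)\neq 0$. If $c<0$, reversing the orientation ($x,z,y$) flips the sign, so without loss of generality $c>0$. Take $\cD=\{a,b,c\}$ with all three marginals uniform on $\{x,y,z\}$. For every ordered pair among $(a,b)$, $(b,c)$ and $(c,a)$, use the cyclic-shift coupling that puts mass $1/3$ on each of $(x,y)$, $(y,z)$ and $(z,x)$; each such coupling has uniform marginals, so it is admissible. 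Under each coupling the expected difference equals $c/3>0$, which gives $a\succ^{\textsc{M}} b$, $b\succ^{\textsc{M}} c$ and $c\succ^{\textsc{M}} a$. In particular $a\succsim^{\textsc{M}} b$ and $b\succsim^{\textsc{M}} c$ hold but $a\succsim^{\textsc{M}} c$ fails, contradicting (1). So the cycle condition holds, and $g(x,y)=u(x)-u(y)$ for some $u$.

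It remains to recover $h$. Define $h(y_0,y_1):=\tilde u(0;y_0,y_1)-u(y_0)$. From $g=u(y_0)-u(y_1)$ we get $\tilde u(1;y_0,y_1)=u(y_1)+h(y_0,y_1)$. Symmetry follows from
\[
h(y_1,y_0)=\tilde u(0;y_1,y_0)-u(y_1)=\tilde u(1;y_0,y_1)-u(y_1)=h(y_0,y_1).
\]

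The main obstacle is the step identifying the right obstruction. Degenerate lotteries only test the ordinal pattern of the signs of $g$, which can be transitive without $g$ being a difference of utilities. The freedom to choose a separate coupling for each binary menu is what exposes the additive cycle condition, and the cyclic-shift coupling of uniform lotteries is the construction I expect to do this.
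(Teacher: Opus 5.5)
Your proof is correct, and it takes a different route from the paper. The paper gives no self-contained proof of Proposition~\ref{prop:method2_transitive}. It only observes that condition~1 is exactly the transitivity axiom for the preference set $\Pi_{\tilde u}$ (joint laws $P,Q,R$ with chained marginals $P_0=R_0$, $P_1=Q_0$, $Q_1=R_1$), and then appeals to \citet[Proposition~1]{lanzani2022correlation} and \citet[Theorem~1]{bikhchandani2011transitive}.

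You instead argue directly. The labeling condition makes the contrast $g$ antisymmetric, and condition~2 is equivalent to the three-cycle identity for $g$. Any violation must occur at distinct $x,y,z$; after reorienting, its cycle sum is some $\kappa>0$. The cyclic-shift coupling of three uniform lotteries then exposes it: it yields $a\succsim^{\textsc{M}} b$ and $b\succsim^{\textsc{M}} c$, but $\E[g(Y(a),Y(c))]=-\kappa/3<0$. Your recovery of a symmetric $h(y_0,y_1):=\tilde u(0;y_0,y_1)-u(y_0)$ is also right.

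The citation route is shorter and ties the statement to the regret and salience literature. Its cost is that the reader must check that condition~1 matches those authors' primitives, which is what the paper's remark does. Your route is elementary and self-contained. It uses only that the expected contrast is linear in the coupling, and it shows that a single three-lottery instance already forces the additive form.

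Your three couplings are in fact the pairwise marginals of one law on $\cY^{\{a,b,c\}}$. Take $Y(b)=\sigma(Y(a))$ and $Y(c)=\sigma^2(Y(a))$ with $\sigma\colon x\mapsto y\mapsto z\mapsto x$. So the intransitivity persists even if all binary menus must share a common state of nature.

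As a cosmetic fix, do not use $c$ both for the cycle sum and for a lottery label.
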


This result is a reformulation of \citet[Proposition~1]{lanzani2022correlation} and \citet[Theorem~1]{bikhchandani2011transitive}. There, the first condition is precisely the transitivity axiom for the preference set $\Pi_{\tilde u} = \{P \in \Delta(\cY^2): \E_P[\tilde u(0; y_0, y_1)] \geq \E_P[\tilde u(1; y_0, y_1)]\}$, i.e.,   for $P,Q,R\in\Delta(\cY^2)$ with marginals $P_0=R_0, P_1=Q_0, Q_1=R_1$, if $ P,Q\in\Pi_{\tilde u}$ then $R\in\Pi_{\tilde u}$.
Proposition~\ref{prop:method2_transitive} shows that, under the menu-dependent projection, transitivity can be recovered only by restricting to utilities that effectively ignore correlation across lotteries. This contrasts with counterfactual decision making on the potential outcome space, which remains transitive even under correlation because the dependence structure is part of the state being evaluated. 

For binary menus, the menu-dependent projection subsumes expected-regret models \citep[e.g.,][]{bell1982regret, loomes1982regret}, since it allows arbitrary utility--state pairs. It is closely related to \citet{lanzani2022correlation}, though he works with utility contrasts and preferences on $\Delta(\cY^2)$. For non-binary menus, however, the menu-dependent projection does not pin down a unique way to rank suboptimal choices within a menu.

\subsection{Context-Dependent Projection}\label{subsec:context_projection}

We next consider the menu-independent but context-dependent projection that incorporates the full set of available lotteries directly into the utility, inducing choices satisfying WARP.

Let $\cD = \{0, 1, \ldots, K-1\}$ index the set of available lotteries. Fix a counterfactual utility $\tilde u:\cD\times\cY^{\cD}\to\R$ and a state of nature $P\in\Delta(\cY^{\cD})$ such that each lottery $p_d\in\Delta(\cY)$ is a marginal of $P$, i.e., $P(Y(d))=p_d$ for all $d \in \cD$. Define the choice function,
\[
    \chi^{\textsc{C}}_{(\tilde u, P)}(\cA):=\argmax_{d\in \cA} \E_P[\tilde u(d; Y(0), \ldots, Y(K-1))].
\]
Its associated value function is $V_P(d;\tilde u)=\E_P[\tilde u(d; Y(0),\ldots,Y(K-1))]$. Hence, by Theorem~\ref{thm:WARP}, $\chi^{\textsc{C}}_{(\tilde u, P)}$ satisfies Sen's $\alpha$ and $\beta$ conditions (Axioms~\ref{axm:sen_alpha}~and~\ref{axm:sen_beta}). The induced revealed preference relation, which is given below, is therefore complete and transitive,
\[
    d \succsim^{\textsc{C}}_{(\tilde u,P)} d'
    \iff
    d \in \chi^{\textsc{C}}_{(\tilde u,P)}(\{d,d'\}).
\]
However, this preference relation might still be context-dependent because a preference between $d$ and $d'$ may depend on what other lotteries are in $\cD$. The following example highlights this point.

\begin{example}\label{exp:context_dep}
    Let $\cY=\{0,1\}$ and consider the lotteries
    \[
        p_a=\delta_{1},\qquad
        p_b=\frac12\delta_{0}+\frac12\delta_{1},\qquad
        p_{c_1}=\frac34\delta_{0}+\frac14\delta_{1},\qquad
        p_{c_2}=\frac14\delta_{0}+\frac34\delta_{1}.
    \]
    Let $\cD_1=\{a,b,{c_1}\}$ and $\cD_2=\{a,b,{c_2}\}$, and consider the same menu $\cA=\{a,b\}$ in both cases. Define an additive counterfactual utility,
    \[
        \tilde u(a; y_a,y_b,y_c)= 1 - y_c,
        \qquad
        \tilde u(b; y_a,y_b,y_c)=y_b.
    \]
    Then, we have,
    \[
        \E[\tilde u(a;Y(a),Y(b),Y(c))] = P(Y(c)=0),
        \qquad
        \E[\tilde u(b;Y(a),Y(b),Y(c))] = P(Y(b) = 1) =\frac12.
    \]
    Under $\cD_1$, $P(Y(c_1)=0)=3/4$, so $a \succ^{\textsc{C}}_{(\tilde u,P)} b$. Under $\cD_2$, $P(Y(c_2)=0)=1/4$, so $b \succ^{\textsc{C}}_{(\tilde u,P)} a$. As this counterfactual utility is additive, the conclusion does not depend on a particular choice of coupling. However, if we modify the additive counterfactual utility to
    \[
        \bar u(a; y_a,y_b,y_c)= 2 + y_c,
        \qquad
        \bar u(b; y_a,y_b,y_c)=y_b,
    \]
    then, we have $\bar u(a; y_a,y_b,y_c) - \bar u(b; y_a,y_b,y_c) \geq 1$ for all $\by \in \cY^3$. Thus, $a \succ^{\textsc{C}}_{(\bar u,P)} b$ for every $P$ and $\succsim^{\textsc{C}}_{(\bar u,P)}$ is context-independent for the pair $(a, b)$.
\end{example}

While standard utilities $u(d;y_d)$ are inherently context-independent, the example above shows that additive counterfactual utilities may or may not be, depending on their structure. Thus, standard utilities are sufficient for context independence, but not necessary, whereas additive counterfactual utilities are not sufficient.

Moreover, the induced relation $\succsim^{\textsc{C}}_{(\tilde u,P)}$ generally does not satisfy the vNM independence (Axiom~\ref{axm:independence}) or continuity (Axiom~\ref{axm:continuity}) on $\Delta(\cY)$. The next example demonstrates the violation of vNM independence.

\begin{example}
Assume the setup of Example~\ref{exp:context_dep} and focus on $\cD_1=\{a,b,c_1\}$. Extend $\tilde u$ by
\[
\tilde u(c_1;y_a, y_b, y_{c_1})= \frac{5}{8}.
\]
Then, $V_P(c_1;\tilde u)=5/8$, such that
\[
    a \succ^{\textsc C}_{(\tilde u,P)} c_1 \succ^{\textsc C}_{(\tilde u,P)} b,
\]
for any $P$ consistent with the marginals.
Now take $\alpha=1/3$. If the induced preference on $\Delta(\cY)$ satisfied vNM independence, $a \succ^{\textsc C}_{(\tilde u,P)} c_1$ would imply
\[
    p_b = \frac{1}{3} p_a + \frac{2}{3} p_{c_1} \succ^{\textsc C}_{(\tilde u,P)} \frac{1}{3} p_{c_1} + \frac{2}{3} p_{c_1} = p_{c_1},
\]
contradicting $c_1 \succ^{\textsc C}_{(\tilde u,P)} b$.

However, the same example does not violate independence on the potential outcome space. To see this, associate $\cD$ with $\{0, 1, 2\}$ and fix a state $P$ consistent with the marginals. Then $P^a = \delta_{(d = 0)}P, P^b = \delta_{(d = 1)}P, P^{c_1} =  \delta_{(d = 2)}P$. Now $\frac{1}{3}P^a + \frac{2}{3}P^{c_1}  = \frac{1}{3} \delta_{(d = 0)}P+ \frac{2}{3}\delta_{(d = 2)}P \neq \delta_{(d = 1)}P = P^b$ due to the difference in the decision indicator. Thus independence does not force a preference reversal.
\end{example}


\subsection{Discussion}

Each projection models preferences on $\Delta(\cY)$ differently. The menu-dependent projection assumes that only lotteries available in the current menu affect choice. In contrast, the context-dependent projection assumes that choice depends on all lotteries, including those not currently available.

For example, in the Allais paradox of Section~\ref{sec:revisit_examples_allais}, both methods take
$\cD=\{a_1,b_1,a_2,b_2\}$. The menu-dependent projection models the data as two binary menus, $\cA_1=\{a_1,b_1\}$ and $\cA_2=\{a_2, b_2\}$, and may choose the utility given in Equation~\eqref{eqn:bell_utility}. On the potential outcome space, this corresponds to treating each menu as a separate binary decision problem with $\cD=\{0,1\}$, associating $a_1,a_2$ with $d=0$ and $b_1, b_2$ with $d=1$, as in Section~\ref{sec:revisit_examples_allais}.

In contrast, the context-dependent projection might use the same menus but evaluates choices directly on $\cD=\{a_1, b_1, a_2, b_2\}$ and may use the utility given in Equation~\eqref{eqn:bell_utility_extended}.  On the potential-outcome space, this corresponds to modeling the setting as a single four-option decision problem.

As shown above, both projections are compatible with expected counterfactual utility and yield transitive preferences on the potential outcome space. However, on the marginal space $\Delta(\cY)$, only the context-dependent projection is always transitive (more precisely, it is transitive on the subset of $\Delta(\cY)$ that is indexed by $\cD$).

The next example, taken from \citet[Section~3]{sawant2026counterfactualharm}, illustrates this further.

\begin{example}\label{exp:stensrud}
    \citet{sawant2026counterfactualharm} consider a setting, in which a decision maker chooses among three medical treatments $\{a,b,c\}$ and observes an ordered health outcome in $\cY=\{1,2,3,4,5,6\}$ (greater means better health). The authors are interested in preferences on the realized outcome space $\Delta(\cY)$ and wish to select the single best option from $\{a,b,c\}$. A natural approach is the context-dependent projection introduced in Section~\ref{subsec:context_projection} that satisfies transitivity. This corresponds to specifying a utility $\tilde u: \cD \times \cY^3 \to \R$ and choosing according to $$\chi^{\textsc{C}}_{(\tilde u, P)}(\{a,b,c\})=\argmax_{d\in \{a,b,c\}} \E_P[\tilde u(d; Y(a), Y(b), Y(c))].$$

    Instead, the authors model the setting as facing three separate binary choice problems, along the lines of Section~\ref{subsec:menu_projection}. They motivate this approach by viewing the decision as a choice among three (binary) randomized controlled trials. They consider the utility $\tilde u(d; y_0, y_1) = \indicator{y_d > y_{1- d}}$ with the following marginal distributions
    \[
        p_a=\frac{1}{6} \delta_{1} + \frac{5}{6} \delta_{4} ,\qquad
        p_b=\frac{1}{2} \delta_{2} + \frac{1}{2} \delta_{5} ,\qquad
        p_c=\frac{5}{6} \delta_{3} + \frac{1}{6} \delta_{6}.
    \]
    Under this setup, the authors show that under any coupling $P \in \Delta(\cY^3)$ 
    \[
        P(Y(b)>Y(a)) \geq P(Y(a)>Y(b)),
        \]
        \[
        P(Y(a)>Y(c)) > P(Y(c)>Y(a)),
        \]
        \[
        P(Y(c)>Y(b)) \geq P(Y(b)>Y(c)),
    \]
    which implies, 
    \[
    b\succsim^{\textsc{M}} a,\qquad a\succ^{\textsc{M}} c, \qquad c\succsim^{\textsc{M}} b.
    \]
    Thus, this pairwise revealed preference relation induced by the menu-dependent projection is not transitive.
    
    However, the preference relation is transitive on the potential outcome space. Following \citet{sawant2026counterfactualharm}, consider three separate binary choice problems.  For any state of nature $P$ with binary joints $P_{ab}$, $P_{bc}$, and $P_{ac}$, and utility $\tilde u$, choosing $b$ over $a$ corresponds to $P_{ab}^b=\delta_{(d=1)}P_{ab} \succsim \delta_{(d=0)}P_{ab}=P_{ab}^a$, choosing $a$ over $c$ to $P_{ac}^a=\delta_{(d=0)}P_{ac} \succ \delta_{(d=1)}P_{ac}=P_{ac}^c$ and
    choosing $c$ over $b$ to $P_{bc}^c=\delta_{(d=1)}P_{bc} \succsim \delta_{(d=0)}P_{bc}=P_{bc}^b$.
    Since each comparison is made within a different state, transitivity is not violated. 
\end{example}

\section{Concluding Remarks}

In this paper, we establish an axiomatic foundation for counterfactual statistical decision theory, which has recently gained popularity but attracted criticisms at the same time.  These results imply that counterfactual statistical decision theory constitutes a coherent decision-making framework, which enables decision-makers to encode a variety of ethical and other subjective considerations.
Our axiomatic results also reconcile apparent inconsistencies and paradoxes that have been discussed in the literature.

Our analysis complements the broader literature on causal decision theory, which studies preference representations over interventions induced by structural causal models and utilities \citep[e.g.,][]{halpern2024representation}. In contrast, we characterize preferences over policy--state pairs induced by counterfactual utilities within the given potential outcome framework. Our framework can nest common models in behavioral economics, such as regret theory \citep{bell1982regret, loomes1982regret, bikhchandani2011transitive, lanzani2022correlation}, while preserving the von Neumann--Morgenstern axioms on the potential outcome space.

Finally, some scholars have expressed skepticism about counterfactual reasoning in decision-making and recommend that practitioners use standard utilities instead. \citet{dawid2023personalised} categorically reject the counterfactual approach, which they label as ``dangerously misguided'' and argue that it ``should not be
used in practice''. \citet{sarvet2023perspectives} raise identification and ontological objections, arguing that 
``when a counterfactual framework is deployed to determine social policies and regulations, it coerces conformity to an unverifiable metaphysics'' since ``a serious problem is that we have no direct evidence that these principal strata
exist''.

We argue that the role of statisticians is not to prescribe which
utility functions decision-makers ought to adopt, or on which outcome
space preferences should be defined. These choices embody the
subjective preferences, ethical judgments, or legal requirements of
decision-makers. Rather, the statistician’s task is to identify optimal decisions and quantify the associated uncertainty under a given utility specification.

\newpage
\bibliography{ref}

\newpage

\appendix

\setcounter{equation}{0}
\setcounter{figure}{0}
\setcounter{table}{0}
\setcounter{section}{0}
\setcounter{theorem}{0}
\setcounter{axiom}{0}
\setcounter{lemma}{0}
\renewcommand {\theequation} {S\arabic{equation}}
\renewcommand {\thefigure} {S\arabic{figure}}
\renewcommand {\thetable} {S\arabic{table}}
\renewcommand {\thesection} {S\arabic{section}}
\renewcommand {\thetheorem} {S\arabic{theorem}}
\renewcommand {\theaxiom} {S\arabic{axiom}}
\renewcommand {\thelemma} {S\arabic{lemma}}

\providecommand*{\theHtheorem}{}
\renewcommand*{\theHtheorem}{supp.theorem.\arabic{theorem}}

\providecommand*{\theHaxiom}{}
\renewcommand*{\theHaxiom}{supp.axiom.\arabic{axiom}}

\providecommand*{\theHlemma}{}
\renewcommand*{\theHlemma}{supp.lemma.\arabic{lemma}}

\begin{center}
\LARGE {\bf Supplementary Appendix}
\end{center}

\section{Within-state Comparisons}\label{app:fixed-P}

In this appendix section, we study the restricted within-state preference relation in Equation~\eqref{eqn:pref_within_state}. We show that this preference relation has weaker identification properties, as it does not identify the state. Here, identification asks what can be recovered by observing all comparisons in the preference relation under the maintained axioms. Even if the state is known, the within-state comparison only identifies the utility up to additive functions of potential outcomes and covariates.

Fix $P \in \Delta(\cY^\cD \times \cX)$. Recall the preference relation in Equation~\eqref{eqn:pref_within_state},
\begin{align*}
    \pi \succsim_{(\tilde u,P)} \rho \iff P^\pi \succsim_{\tilde u} P^\rho,
\end{align*}
for policies $\pi, \rho: \cY^\cD \times \cX \to \Delta(\cD)$. As explored in \cite{karni1983state}, \cite{rubin1987weak} and \cite{AumannSavage71}, this model treats both $\tilde u$ and $P$ as parameters of the decision maker where $P$ is the subjective belief about the population. It generalizes the Subjective Expected Utility model of \cite{savage1972foundations} and \cite{anscombe1963definition}. 

To study identification in this model, note that Definition~\ref{def:repr} and Equation~\eqref{eqn:value_fct} imply
\begin{align*}
     \pi \succsim_{(\tilde u,P)} \rho \iff V_P(\pi; \tilde u) \geq V_P(\rho; \tilde u) \iff V_Q(\pi; \tilde v) \geq V_Q(\rho; \tilde v),
\end{align*}
for any utility-state pair $(\tilde v, Q)$ such that $\tilde u(d; \by, \bx) \cdot P(\by, \bx) = \tilde v(d; \by, \bx) \cdot Q(\by, \bx)$. In particular, for any $Q \in \Delta(\cY^\cD \times \cX)$ such that $P \ll Q$, setting
\[
    \tilde v(d; \by, \bx) := \tilde u(d; \by, \bx) \frac{P(\by, \bx)}{Q(\by, \bx)},
\]
generates the same choices between policies. Hence, the subjective probability, as well as the utility, are unidentified in this model.

For completeness, we discuss axioms for the preference relation $\succsim_{(\tilde u,P)}$, henceforth denoted by $\succsim^*$ to indicate the non-identification of $P$. Because this relation is a restriction of the across-state preference relation in Definition~\ref{def:repr}, it inherits its axioms. These axioms are still strong enough to yield a representation.

In the following, for two policies $\pi, \rho: \cY^\cD \times \cX \to \Delta(\cD)$, the mixed decision rule $\alpha \pi + (1-\alpha)\rho$ assigns probability $\alpha \pi(d; \by, \bx) + (1-\alpha) \rho(d; \by, \bx)$ to decision $d$ given $(\by, \bx)$.

\begin{axiom}[Completeness]\label{axm:fixed_p_completeness}
For all policies $\pi,\rho$, either $\pi\succsim^*\rho$ or
$\rho\succsim^*\pi$.
\end{axiom}

\begin{axiom}[Transitivity]\label{axm:fixed_p_transitivity}
For all policies $\pi,\rho,\sigma$, if $\pi\succsim^*\rho$ and
$\rho\succsim^*\sigma$, then $\pi\succsim^*\sigma$.
\end{axiom}

\begin{axiom}[Independence]\label{axm:fixed_p_independence}
For all policies $\pi,\rho,\sigma$ and $\alpha\in(0,1]$, if
$\pi\succ^*\rho$, then $\alpha\pi+(1-\alpha)\sigma
    \succ^*
    \alpha\rho+(1-\alpha)\sigma$.
\end{axiom}

\begin{axiom}[Continuity]\label{axm:fixed_p_continuity}
For all policies $\pi,\rho,\sigma$, if
$\pi\succ^*\rho\succ^*\sigma$, then there exist
$\alpha,\beta\in(0,1)$ such that $\alpha\pi+(1-\alpha)\sigma
    \succ^*\rho
    \succ^*
    \beta\pi+(1-\beta)\sigma$.
\end{axiom}


\begin{theorem}[Proposition~7.4 of \cite{kreps1988notes}]\label{thm:fixed_p_vnm}
\label{thm:fixed_p_vnm-2}
The following are equivalent:
\begin{enumerate}
    \item $\succsim^*$ satisfies
    Axioms~\ref{axm:fixed_p_completeness}--\ref{axm:fixed_p_continuity};
    \item there exists a counterfactual utility $\tilde u:\cZ\to\R$ and a probability $P\in \Delta(\cY^\cD\times \cX)$, such that 
    for all policies $\pi, \rho: \cY^\cD \times \cX \to \Delta(\cD)$,
    \[
        \pi\succsim^*\rho
        \iff
        V_P(\pi;\tilde u)\geq V_P(\rho;\tilde u).
    \]
\end{enumerate}
Moreover, suppose $(\tilde u, P)$ represent $\succsim^*$.  Then, $(\tilde v, Q)$ represents $\succsim^*$ if and only if there exists a constant $a>0$ and a mapping $b:\cY^\cD\times \cX \to \bR$ such that 
\[
\tilde v(d, \by, \bx) Q(\by, \bx) = a\tilde u(d, \by, \bx) P(\by, \bx) + b(\by, \bx)
\]
for all $(d,\by,\bx)\in\cZ$. In particular, if $P$ has full support, then $\tilde v$ represents
$\succsim^*$ with the same $P$ if and only if there exist a constant
$a>0$ and a function $c:\cY^\cD\times\cX\to\R$ such that
\[
    \tilde v(d;\by,\bx)
    =
    a\tilde u(d;\by,\bx)+c(\by,\bx)
\]
for all $(d,\by,\bx)\in\cZ$.
\end{theorem}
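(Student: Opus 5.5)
The plan is to reduce the statement to the standard vNM theorem on a mixture space (Theorem~\ref{thm:vnm} in its general Herstein--Milnor form). The tool is an affine embedding of the set of policies into $\Delta(\cZ)$. Let $\cS := \cY^\cD\times\cX$, which is finite. To each policy $\pi$ I associate the measure
\[
m_\pi(d,\by,\bx) := \pi(d;\by,\bx)/|\cS|,
\]
that is, $P_0^\pi$ with $P_0$ the uniform state on $\cS$. The map $\pi\mapsto m_\pi$ is injective, and it commutes with the pointwise mixing of policies: $m_{\alpha\pi+(1-\alpha)\rho}=\alpha m_\pi+(1-\alpha)m_\rho$. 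Its image $\cM_0$ is the convex set of distributions on $\cZ$ whose $\cS$-marginal is uniform. Transporting $\succsim^*$ to $\cM_0$ turns Axioms~\ref{axm:fixed_p_completeness}--\ref{axm:fixed_p_continuity} into exactly the vNM axioms on the mixture set $\cM_0$.

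\emph{Step 1 (1 $\Rightarrow$ 2).} The mixture-space vNM theorem yields an affine $U:\cM_0\to\R$ representing the transported relation. I then extend $U$ to an affine function on the affine hull of $\cM_0$ in $\R^{\cZ}$, and from there to all of $\R^{\cZ}$. This gives
\[
U(m)=c_0+\sum_{z} w(z)m(z)
\]
for some $w:\cZ\to\R$. Because $\sum_d \pi(d;s)=1$ for every $s\in\cS$, the constant $c_0$ can be absorbed into $w$ by adding $c_0$ to $w(\cdot,s)$ for every $s$. Finally I take $P:=P_0$ uniform and $\tilde u := w$ up to the factor $|\cS|$, so that $V_P(\pi;\tilde u)=U(m_\pi)$.

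\emph{Step 2 (2 $\Rightarrow$ 1).} By Equation~\eqref{eqn:value_fct}, $V_P(\pi;\tilde u)$ is linear in $\pi$ with respect to pointwise mixing. Completeness and transitivity are therefore immediate. Independence and continuity follow from linearity exactly as in the usual vNM argument.

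\emph{Step 3 (uniqueness).} The key observation is that $V_P(\pi;\tilde u)=\sum_{d,s} W(d,s)\pi(d;s)$ with $W:=\tilde u\cdot P$, so only the product $W$ matters. Suppose $(\tilde u,P)$ and $(\tilde v,Q)$ both represent $\succsim^*$, and let $W':=\tilde v\cdot Q$. Both $\pi\mapsto\langle W,\pi\rangle$ and $\pi\mapsto\langle W',\pi\rangle$ are then affine representations on the mixture set. vNM uniqueness gives $a>0$ and $c$ with
\[
\langle W',\pi\rangle = a\langle W,\pi\rangle + c \quad\text{for all }\pi .
\]
Hence $\sum_{d,s}(W'-aW)(d,s)\pi(d;s)$ is constant over all policies. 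Varying $\pi(\cdot;s)$ over $\Delta(\cD)$ for a single $s$ while holding the others fixed forces $(W'-aW)(\cdot,s)$ to be independent of $d$. Call this common value $b(s)$; then $W'=aW+b$. The converse holds because adding $b(s)$ shifts every value by the same constant $\sum_s b(s)$.

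The step I expect to be the main obstacle is the degenerate case of total indifference, where vNM uniqueness gives no control over $a$. In that case $\langle W,\pi\rangle$ is constant, so $W$ is itself $d$-independent, and so is $W'$. Then $W'=aW+b$ holds for any $a>0$ with a suitable $b$, and the claim survives. For the full-support corollary with $Q=P$, I divide $W'=aW+b$ by $P(\by,\bx)>0$ and set $c:=b/P$.
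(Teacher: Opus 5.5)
Your proof is correct and follows essentially the argument behind the result the paper invokes. The paper gives no separate proof; it cites Kreps' Proposition~7.4, the state-dependent Anscombe--Aumann representation for acts $\pi:\cS\to\Delta(\cD)$, whose proof also applies the mixture-space theorem and identifies only the state-wise utilities $u_s(d)=\tilde u(d;s)P(s)$ up to $a u_s+b_s$, which is exactly your $W=\tilde u\cdot P$ and $W'=aW+b$ (you get additivity across states by affine extension to $\R^{\cZ}$ rather than Kreps' explicit mixing identity, and in the total-indifference case uniqueness up to positive affine transformation in fact holds trivially, so your separate treatment is sound but not needed).
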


Theorem~\ref{thm:fixed_p_vnm} formalizes the non-identification property of utility-state pair under the model in Equation~\eqref{eqn:pref_within_state}. In particular, utilities are identified only up to additive functions of principal strata and covariates. This is problematic for decision-making with counterfactual utility, which relies on interpreting utilities across units with different potential outcomes, such as comparing $\tilde u(d;\by,\bx)$ with $\tilde u(d;\by^\prime,\bx^\prime)$.

\section{An Alternative Formulation with Unit Strata as States}\label{app:unit_state}

In statistical decision theory, a state of nature is the unknown object that determines utilities and indexes the sampling distribution of the data \citep{wald1950statistical,berger1985statistical}. In treatment-choice problems, it is natural to take the population law as the state of nature, as is done explicitly in the literature \citep[e.g.,][]{manski2004statistical,STOYE200970,stoye2011axioms,koch2025statisticaldecisiontheorycounterfactual}.

However, one might instead take a unit's potential-outcome stratum as the primitive state and assess decisions by their performance for that unit's stratum. This formulation is mathematically equivalent to the fixed-$P$ comparison in Equation~\eqref{eqn:pref_within_state} and Appendix~\ref{app:fixed-P}, where $P$ acts as a prior over unit strata with the decision-maker adopting a Bayesian decision-criterion.

We briefly recall notation and terminology from \citet{stoye2011axioms, stoye2011statistical} building on \citet{anscombe1963definition}. Let $\cS$ denote the set of states of nature. As in Section~\ref{sec:setup}, take $\cZ = \cD \times \cY^\cD \times \cX$ to be the consequence space, evaluated by a counterfactual utility. An \emph{act} $f$ is a map from states to outcome distributions $f:\cS \to \Delta(\cZ)$. Given a counterfactual utility $\tilde u: \cZ \to \R$, an act $f$ is evaluated by its statewise expected utility
\[
    \tilde u \circ f(s) := \int_{\cZ} \tilde u(z) df(s)(z).
\]
To connect this notation to the setup in Section~\ref{sec:setup} take $\cS = \Delta(\cY^\cD \times \cX)$. Associate each act $f$ with a policy $\pi: \cY^\cD \times \cX \to \Delta(\cD)$ through $f_\pi(P) = \pi \cdot P$ for $P \in \cS$. Then
\[
    \tilde u \circ f_\pi(P) = \sum_{(d, \by, \bx) \in \cZ} \tilde u(d; \by, \bx) \pi(d; \by, \bx) P(\by, \bx) = V_P(\pi; \tilde u).
\]
Thus, in this formulation, acts and policies are equivalent objects to be evaluated with respect to the underlying population law.

Now, instead, take $\cS = \cY^\cD \times \cX$ to be the set of unit-level potential-outcome strata and covariates. Under Lemma~\ref{lem:truthful_acts}, for a state $s = (\by, \bx)$, every act is of the form
\begin{align}\label{eqn:stratum_act_representation}
    f_\pi(\by, \bx) = \sum_{d \in \cD} \pi(d; \by, \bx) \delta_{(d, \by, \bx)}, 
\end{align}
for some policy $\pi: \cY^\cD \times \cX \to \Delta(\cD)$. For any $A\subseteq\cZ$, $ f_\pi(\by, \bx)(A) = \sum_{d \in \cD} \pi(d; \by, \bx) \indicator{(d, \by, \bx) \in A}$. Given a counterfactual utility $\tilde u$, the statewise expected utility is
\begin{align*}
    \tilde u \circ f_\pi(\by, \bx) =  \sum_{d \in \cD} \tilde u(d; \by, \bx) \pi(d; \by, \bx).
\end{align*}
Here, acts are again equivalent to policies, but these are now evaluated based on the unit's potential outcome vector and covariates. However, the statewise expected utility still depends on the unit's stratum $(\by, \bx)$, which is not identified, even with standard utilities.

Thus, to make decision-making feasible and to recover Equation~\eqref{eqn:value_fct} (the decision-making criterion used in the literature), one must introduce a measure $P \in \Delta(\cY^\cD \times \cX)$ and integrate over states
\[
    \int_{\cY^\cD \times \cX} (\tilde u \circ f_\pi(\by, \bx)) dP = V_P(\pi; \tilde u).
\]
Formally, $P$ plays the role of a prior over states \citep{stoye2011axioms}. Thus, under the stratum formulation, evaluating acts (equivalently, policies) by $V_P(\pi;\tilde u)$ corresponds to Bayesian decision-making. Here $P$ is fixed, so comparisons are made within a fixed probability measure over unit strata.

Furthermore, in this setting, all acts in Equation~\eqref{eqn:stratum_act_representation} are \emph{truthful} acts, i.e., the decision is evaluated with respect to the state in which it was made. Because $|\cY| \geq 2$, there are no \emph{constant} truthful acts, i.e., acts always depend on the state $s = (\by, \bx)$. Consequently, axioms imposed on constant acts like \emph{Independence of irrelevant alternatives (IIA) for constant acts} in \citet{stoye2011axioms} or \emph{Independence for Constant Acts (ICA)} and \emph{Independence of Irrelevant Alternatives for Constant Acts} in \citet{stoye2011statistical} impose no restrictions. Moreover, we cannot pin down a vNM utility representation over constant acts like in \citet[Lemma~1]{stoye2011axioms}, which forms the basis for the analysis in \citet{stoye2011axioms, stoye2011statistical}.

Finally, this formulation confounds prior $P$ and utility $\tilde u$ and suffers from the same identification challenges as within-state comparisons discussed in Appendix~\ref{app:fixed-P}. Preferences over policies alone cannot generally separate the probability assigned to a stratum from its utility weight. Thus, even if $P$ is known, by Theorem~\ref{thm:fixed_p_vnm}, we can only identify a utility up to additive functions of principal strata and covariates,
\[
    \tilde u^\prime(d; \by, \bx) = a u(d; \by, \bx) + b(\by, \bx), \qquad a > 0.
\]
This is detrimental to the idea of decision-making with counterfactual utility, which relies on comparing utilities across units with different potential outcomes, such as comparing $\tilde u(d;\by,\bx)$ with $\tilde u(d;\by^\prime,\bx^\prime)$.

\section{Axiomatization for Utilities with Realized Outcomes}
\label{app:outcome_utility}

We consider the axiomatic results for settings, in which utilities depend on realized outcomes alone.
\begin{axiom}[Realized Outcome Sufficiency]\label{axm:outcome_sufficiency}
For all $p, q \in \Delta(Z)$, all $d,d^\prime \in D$ and $\bx,\bx^\prime \in \cX$, if
\begin{enumerate}
    \item $p(D = k) = \delta_d(k)$ and $q(D = k) = \delta_{d^\prime}(k), \quad \forall k \in \cD$,
    \item $p(\bX = \bm{v}) = \delta_{\bx}(\bm{v})$ and $q(\bX = \bm{v}) = \delta_{\bx^\prime}(\bm{v}), \quad \bm{v}\in \cX$,
    \item $p(Y(d)=y) = q(Y(d^\prime)=y), \quad y\in \cY$,
\end{enumerate}
then $p \sim q$.
\end{axiom}

Axiom~\ref{axm:outcome_sufficiency} implies indifference between $(\pi,P)$ and $(\rho,Q)$ whenever the distributions of the observed outcomes coincide, regardless of the decision or covariates. It is weaker than requiring indifference whenever two distributions agree on the marginal of $Y(D)$, but, as with Axiom~\ref{axm:irrelevance_of_untaken_decisions}, the two formulations are equivalent under Axioms~\ref{axm:completeness}--\ref{axm:continuity}. Moreover, Axiom~\ref{axm:outcome_sufficiency} implies Axiom~\ref{axm:irrelevance_of_untaken_decisions} and is thus also indifferent to unrealized outcomes.
The theorem below establishes that Axioms~\ref{axm:completeness}--\ref{axm:continuity},~and~\ref{axm:outcome_sufficiency} characterize the preferences induced on the potential outcome space $\cZ$ by utilities that depend on realized outcomes. The proof is given in Appendix~\ref{app:outcome}.
\begin{theorem}[Preferences of Outcome Utilities on the potential outcome Space]\label{thm:outcome}
The following are equivalent:
\begin{enumerate}
    \item $\succsim$ satisfies Axioms~\ref{axm:completeness}--\ref{axm:continuity} and~\ref{axm:outcome_sufficiency};
    \item there exists an outcome utility $u(d; \by, \bx) = u(y_d)$ that represents $\succsim$ in the sense of Definition~\ref{def:repr}.
\end{enumerate}
Moreover, the utility function $ u$ is unique up to positive affine transformation.
\end{theorem}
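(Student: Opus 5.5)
We prove Theorem~\ref{thm:outcome} by reduction to Theorem~\ref{thm:vnm} and Theorem~\ref{thm:std}, in the same way Theorem~\ref{thm:std} itself is derived.

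\emph{Direction (2)$\Rightarrow$(1).} Suppose $u(d;\by,\bx)=u(y_d)$ represents $\succsim$. It is a counterfactual utility on $\cZ$, so Theorem~\ref{thm:vnm} gives Axioms~\ref{axm:completeness}--\ref{axm:continuity}. For Axiom~\ref{axm:outcome_sufficiency}, let $p$ put all mass on decision $d$ and $q$ all mass on $d'$. Then
\[
\E_p[u(Y(D))]=\sum_y u(y)\,p(Y(d)=y)=\sum_y u(y)\,q(Y(d')=y)=\E_q[u(Y(D))],
\]
so $p\sim q$.

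\emph{Direction (1)$\Rightarrow$(2).} Theorem~\ref{thm:vnm} provides a representing $\tilde u:\cZ\to\R$. Axiom~\ref{axm:outcome_sufficiency} implies Axiom~\ref{axm:irrelevance_of_untaken_decisions}: take $d=d'$ and $\bx=\bx'$. Theorem~\ref{thm:std} then lets us take $\tilde u(d;\by,\bx)=v(d;y_d,\bx)$. Representations are unique up to positive affine maps, so this $v$ also represents $\succsim$.

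It remains to show that $v(d;y,\bx)$ does not depend on $(d,\bx)$. Fix $y\in\cY$ and any two pairs $(d,\bx)$ and $(d',\bx')$. Let $p=\delta_{(d,\by,\bx)}$ and $q=\delta_{(d',\by',\bx')}$, where $\by$ has $d$-th coordinate $y$ and $\by'$ has $d'$-th coordinate $y$, with the other coordinates arbitrary. Both are degenerate at a single decision and covariate value, and $p(Y(d)=\cdot)=q(Y(d')=\cdot)=\delta_y$. Axiom~\ref{axm:outcome_sufficiency} gives $p\sim q$, hence $v(d;y,\bx)=v(d';y,\bx')$. Defining $u(y):=v(0;y,\bx_0)$ for a fixed $\bx_0$ gives $\tilde u(d;\by,\bx)=u(y_d)$ for every point of $\cZ$. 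Because this is an equality of functions, and not merely of the induced preferences, $u$ represents $\succsim$.

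The only step needing care is invoking Axiom~\ref{axm:outcome_sufficiency} on Dirac measures. This is legitimate because Dirac measures at points of $\cZ$ lie in $\Delta(\cZ)$ and satisfy all three premises, so no further limiting argument is required.

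\emph{Uniqueness.} If $u$ and $u'$ both represent $\succsim$, Theorem~\ref{thm:vnm} gives $u'(y_d)=a\,u(y_d)+b$ on $\cZ$ with $a>0$. Because every $y\in\cY$ occurs as some $y_d$, this yields $u'=au+b$ on $\cY$. The main obstacle is therefore just the bookkeeping of the reduction to Theorem~\ref{thm:std}.
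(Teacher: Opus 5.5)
Your proposal is correct and follows essentially the same route as the paper. For the forward direction, Axiom~\ref{axm:outcome_sufficiency} is verified via $V_P(d;u)=\sum_y u(y)P(Y(d)=y)$. For the converse, the axiom is applied to Dirac measures $\delta_{(d,\by,\bx)}\sim\delta_{(d',\by',\bx')}$ with $y_d=y'_{d'}$, which shows that the vNM utility depends only on $y_d$, and uniqueness is inherited from Theorem~\ref{thm:vnm}. The only difference is your intermediate pass through Theorem~\ref{thm:std}. It is harmless but unnecessary, because the paper's single Dirac-measure comparison already removes the dependence on $d$, $\bx$, and the unrealized coordinates at once.
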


\section{Formal Results for Section~\ref{sec:failing_to_preserve_asymmetry}} \label{app:failing_to_preserve_asymmetry}

We formalize and generalize the discussion in Section~\ref{sec:failing_to_preserve_asymmetry} on \citet{Gelman_2025}'s proposal to extend counterfactual utilities by switching the order of evaluation. Lemma~\ref{lem:only_marginals} shows that, under this proposal, any two joint distributions with the same marginals induce the same decision, even if their dependence structure differs. Theorem~\ref{thm:gm_reduction_to_std} further shows that, under mild conditions on such extended utility, the resulting decision recommendation can always be obtained by a standard utility for every state of nature. These results are not restricted to binary outcomes.

\begin{lemma}\label{lem:only_marginals}
Suppose decisions are evaluated as described in Section~\ref{sec:setup_soe} following \citet{Gelman_2025}. If two states $P,Q \in \Delta(\cY^{\cD}\times\cX)$ satisfy
\[
    \E_P[Y(d)] = \E_Q[Y(d)] \qquad \forall d \in \cD,
\]
then, they induce the same decision.
\end{lemma}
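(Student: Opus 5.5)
The argument is essentially definitional: I will show that the decision rule of Section~\ref{sec:setup_soe} factors through the map $P \mapsto (p_0(P),\ldots,p_{K-1}(P))$. Concretely, under GM's procedure (no covariates, deterministic decisions) a state $P \in \Delta(\cY^\cD\times\cX)$ enters the evaluation of decision $d$ only through the number
\[
\tilde u^{\ext}\bigl(d; p_0(P),\ldots,p_{K-1}(P)\bigr), \qquad p_k(P) := \E_P[Y(k)],
\]
and the recommended decision is $\argmax_{d\in\cD}$ of this quantity.

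First I will write out the chosen set as a function of $P$. I will then note that it depends on $P$ only through the vector of marginal means. Under the hypothesis $\E_P[Y(d)]=\E_Q[Y(d)]$ for all $d$, the input vectors coincide, so $\tilde u^{\ext}(d;p(P))=\tilde u^{\ext}(d;p(Q))$ for every $d\in\cD$. The two argmax sets, and hence the induced decisions (including any tie sets), are therefore equal.

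To cover the generalized version with covariates, I will carry out the same argument conditionally on $\bx$. The decision at $\bx$ maximizes $\tilde u^{\ext}(d;\E_P[Y(0)\mid\bx],\ldots,\E_P[Y(K-1)\mid\bx],\bx)$. Accordingly, I will read the lemma's hypothesis as equality of the relevant (conditional) marginal means, which is the case GM consider, where covariates are absent.

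The only delicate point is interpretive rather than technical. I need to make explicit that GM's evaluation uses no feature of $P$ beyond the marginal means, and in particular never the coupling of $(Y(0),\ldots,Y(K-1))$. Once that is stated as part of the setup, the lemma follows immediately, and the conclusion holds even if $P$ and $Q$ have very different dependence structures.
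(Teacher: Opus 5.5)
Your proposal is correct and follows the paper's argument: the paper also just observes that $\tilde u^{\ext}(d;p_0,\ldots,p_{K-1})$ depends on $P$ only through the marginal means, so equal means give equal values for every $d$ and hence the same decision. Your covariate extension goes beyond the lemma, which covers GM's covariate-free setting, and you are right that it needs a stronger hypothesis: equality of conditional means, or, for the paper's $\Delta(\cY)$-valued generalization with non-binary outcomes, equality of the conditional marginal distributions.
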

Lemma~\ref{lem:only_marginals} follows directly from the fact that if $p_d=\E_P[Y(d)]=\E_Q[Y(d)] = q_d$ then $\tilde u^{\ext}(d; p_0,\ldots,p_{K-1}) = \tilde u^{\ext}(d; q_0,\ldots,q_{K-1})$. This shows that the joint law of $(Y(0), \ldots, Y(K-1))$ is irrelevant under GM's proposal.
Lemma~\ref{lem:only_marginals} is not the only implication for the joint structure of potential outcomes. Under GM's proposal and mild conditions on the extended utility $\tilde u^{\ext}$, one can in fact construct a standard utility $u^{\ext}(d;p_d)$ that induces the same decision behavior. We show this next.

We restrict attention to binary decisions but allow for general outcomes.
Fix a counterfactual utility $\tilde u$ and its extension $\tilde u^{\ext}$. Let $p_d=\E[Y(d)]$ for $d\in\{0,1\}$. Under GM's procedure summarized in Section~\ref{sec:setup_soe}, the only decision-relevant object is the contrast
\[
    \Gamma \tilde u^{\ext}(p_0,p_1)
    := \tilde u^{\ext}(1; p_0,p_1) - \tilde u^{\ext}(0; p_0,p_1),
\]
recommending $d=1$ if $\Gamma \tilde u^{\ext}(p_0,p_1)>0$, $d=0$ if $\Gamma \tilde u^{\ext}(p_0,p_1)<0$, and indifference if $\Gamma \tilde u^{\ext}(p_0,p_1)=0$. We impose the following assumptions.

\begin{assumption}[Bounded Means]\label{ass:bounded_means}
    There exist $L < U$ such that $p_d \in I := [L, U]$ for $d \in \{0, 1\}$.
\end{assumption}

\begin{assumption}[Decision Monotonicity]\label{ass:decision_monotonicity}
    For every fixed $p_0 \in I$, the map $p_1 \to \Gamma \tilde u^{\ext}(p_0, p_1)$ is non-decreasing on $I$. Similarly, for every fixed $p_1 \in I$, the map $p_0 \to \Gamma \tilde u^{\ext}(p_0, p_1)$ is non-increasing on $I$.
\end{assumption}

\begin{assumption}[Unique Crossing]\label{ass:unique_crossing}
Fix $p_0 \in I$.
\begin{enumerate}[label=(\alph*)]
    \item the map $p_1 \to \Gamma \tilde u^{\ext}(p_0, p_1)$ has at most one root.
    \item suppose that there exist $\underline{p}_1 < \overline{p}_1$ such that
        $\Gamma \tilde u^{\ext} (p_0, \underline{p}_1) < 0$ and $\Gamma \tilde u^{\ext}(p_0,\overline{p}_1) > 0$ hold.
    Then, there exists $\tau(p_0) \in (\underline{p}_1, \overline{p}_1)$ such that $\Gamma \tilde u^{\ext}(p_0, \tau(p_0)) = 0$.
\end{enumerate}
\end{assumption}
Assumption~\ref{ass:bounded_means} holds whenever outcomes are bounded, whereas Assumption~\ref{ass:decision_monotonicity} requires that a utility contrast is more likely to recommend a decision $d$ when $p_d$ is greater.
Lastly, Assumption~\ref{ass:unique_crossing} is a mild regularity condition. If the contrast can recommend each decision for some values of $(p_0,p_1)$, it must cross zero exactly once.

The following theorem shows that by switching the order of evaluation, for a given counterfactual utility under Assumptions~\ref{ass:bounded_means}--\ref{ass:unique_crossing},, one can always find a standard utility with the same optimal policy recommendation for any state of nature $P$. In particular, the problem reduces to comparing
(possibly non-linear transformation of) marginal means, which does not respect the original asymmetric structure.
\begin{theorem}[Equivalent standard utility]\label{thm:gm_reduction_to_std}
Assume $\cD=\{0,1\}$. Suppose that decisions are evaluated as proposed in \citet{Gelman_2025}. Let $\tilde u$ be a counterfactual utility and $\tilde u^{\ext}$ be a chosen extension. Under Assumptions~\ref{ass:bounded_means}--\ref{ass:unique_crossing}, there exists an extended standard utility of the form $u^{\ext}(d;p_d)=\phi_d(p_d)$ that induces the same optimal decision as $\tilde u^{\ext}$. That is, for all state of natures $(p_0,p_1)\in I^2$,
    \begin{align*}
        \Gamma \tilde u^\ext(p_0, p_1) > 0 \iff \Gamma u^\ext(p_0, p_1) > 0,
    \end{align*}
    and $\Gamma \tilde u^\ext(p_0, p_1) = 0 \iff \Gamma u^\ext(p_0, p_1) = 0$. Moreover, one may choose $\phi_1$ to be the identity, $\phi_1(p)=p$, and $\phi_0$ to be non-decreasing.
\end{theorem}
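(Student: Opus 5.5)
The plan is to build $\phi_0$ as the \emph{switching threshold} of the contrast. For each fixed $p_0$, I will show that the sign of $\Gamma \tilde u^{\ext}(p_0,\cdot)$ on $I$ is determined by comparing $p_1$ with a single number $\phi_0(p_0)$. With $\phi_1(p)=p$ we have $\Gamma u^{\ext}(p_0,p_1)=p_1-\phi_0(p_0)$. Both claimed equivalences then reduce to showing, for every $(p_0,p_1)\in I^2$, that
\[
\Gamma \tilde u^{\ext}(p_0,p_1)>0 \iff p_1>\phi_0(p_0)
\quad\text{and}\quad
\Gamma \tilde u^{\ext}(p_0,p_1)=0 \iff p_1=\phi_0(p_0).
\]

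\textbf{Step 1: the sign structure in $p_1$.} Fix $p_0\in I$ and write $g(p_1):=\Gamma \tilde u^{\ext}(p_0,p_1)$. By Assumption~\ref{ass:decision_monotonicity}, $g$ is non-decreasing on $I$. Hence $\{g>0\}$ is an upper interval of $I$, $\{g<0\}$ is a lower interval, and $\{g=0\}$ is an interval between them. By Assumption~\ref{ass:unique_crossing}(a) the zero set has at most one point. There are three cases.
\begin{enumerate}
\item If $g$ has a root $r$, set $\phi_0(p_0):=r$. Monotonicity gives $g\ge 0$ to the right of $r$ and $g\le 0$ to the left. Uniqueness of the root makes these inequalities strict off $r$.
\item If $g$ has no root, Assumption~\ref{ass:unique_crossing}(b) rules out $g$ taking both signs, since a sign change from negative to positive would produce a root. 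So $g$ has a constant sign on $I$. If $g>0$ throughout, set $\phi_0(p_0):=L-1$.
\item If $g<0$ throughout, set $\phi_0(p_0):=U+1$.
\end{enumerate}
In each case the two displayed equivalences hold for this $p_0$. Note that $\phi_0$ need only be real-valued, so the sentinel values outside $I$ are allowed.

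\textbf{Step 2: monotonicity of $\phi_0$.} This is the step needing some care because of the sentinel cases. Take $p_0<p_0'$. By Assumption~\ref{ass:decision_monotonicity}, $\Gamma \tilde u^{\ext}(p_0',p_1)\le \Gamma \tilde u^{\ext}(p_0,p_1)$ for every $p_1\in I$. Therefore the positive set $S(p_0):=\{p_1\in I:\Gamma \tilde u^{\ext}(p_0,p_1)>0\}$ satisfies $S(p_0')\subseteq S(p_0)$, and likewise the non-negative sets are nested.

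By Step 1, $S(p_0)=\{p_1\in I: p_1>\phi_0(p_0)\}$. Each $\phi_0(p_0)$ lies in $I$ or equals one of the two sentinels. Nestedness of these upper sets, together with nestedness of the non-negative sets $\{p_1\ge\phi_0\}$ (needed to separate, say, a root at $U$ from the sentinel $U+1$), forces $\phi_0(p_0)\le\phi_0(p_0')$. I will check this by a short case analysis: both thresholds in $I$, one or both equal to $L-1$, one or both equal to $U+1$. In each case a strict reversal would give a point $p_1$ lying in $S(p_0')$ but not in $S(p_0)$, or the analogous violation for the non-negative sets, which contradicts the nesting.

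\textbf{Step 3: conclusion.} With $\phi_1$ the identity and $\phi_0$ non-decreasing, $u^{\ext}(d;p_d)=\phi_d(p_d)$ is an extended standard utility. Step 1 gives the sign and zero equivalences for all $(p_0,p_1)\in I^2$, which proves the theorem.

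I expect the main obstacle to be the bookkeeping in Steps 1 and 2. The key points are excluding a non-degenerate zero interval, which uses part (a) of Assumption~\ref{ass:unique_crossing}, and excluding a sign jump without a root, which uses part (b). Once those are settled, the argument is elementary.
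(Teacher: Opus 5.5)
Your proposal is correct and is essentially the paper's proof: the same threshold construction with $\phi_1$ the identity, $\phi_0(p_0)$ the unique root of $p_1\mapsto\Gamma \tilde u^{\ext}(p_0,p_1)$, and the sentinels $L-1$ and $U+1$ when the contrast has constant sign, with parts (a) and (b) of the unique-crossing assumption giving uniqueness and existence of the root. The only difference is bookkeeping: you get monotonicity of $\phi_0$ from the nesting of the positive and non-negative sets across $p_0$ (and you correctly note that the non-negative sets are needed to separate a root at $U$ from the sentinel $U+1$), whereas the paper runs a direct four-case analysis that uses a second-root contradiction in the interior case.
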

The proof is given in Appendix~\ref{app:gm_reduction_to_atd}.
We illustrate this result with two examples.
\begin{example}[Asymmetric Extension]\label{exp:asymmetric_extension}
Assume $Y\in\{0,1\}$. Then any (non-extended) counterfactual utility contrast can be written as
    \begin{align*}
    \Gamma u(y_0,y_1)
    = u_{10}\,\indicator{y_0>y_1}
    + u_{01}\,\indicator{y_0<y_1}
    + u_{00}\,\indicator{y_0=y_1=0}
    + u_{11}\,\indicator{y_0=y_1=1}.
\end{align*}
Under the asymmetric extension of Section~\ref{sec:non_unique_extension}, the extended contrast may take the form
\begin{align*}
    \Gamma \tilde u^\ext_{\mathrm{Asm}}(p_0,p_1)
    = \tilde u_{0}\,\indicator{p_0>p_1}
    + \tilde u_{1}\,\indicator{p_0<p_1}
    + \tilde u_{01}\,\indicator{p_0=p_1}.
\end{align*}
Note that this extension does not distinguish never-survivors, $(Y(0), Y(1)) = (0, 0)$, from always-survivors, $(Y(0), Y(1)) = (1, 1)$.
Assumption~\ref{ass:bounded_means} is immediate. Assumptions~\ref{ass:decision_monotonicity}~and~\ref{ass:unique_crossing} either require $\tilde u_0 \leq \tilde u_{01} \leq \tilde u_1 < 0$, $0 < \tilde u_0 \leq \tilde u_{01} \leq \tilde u_1$ or $\tilde u_0<0= \tilde u_{01} <\tilde u_1$. In particular, when $\tilde u_{01}=0$, the induced decision depends only on the sign of $p_1-p_0$ and can therefore be replicated by a standard utility of the form $u(d;y_d)=\beta y_d$.
\end{example}

\begin{example}[Product extension]\label{exp:product_extension}
In the setting of Example~\ref{exp:asymmetric_extension}, the product extension of Section~\ref{sec:non_unique_extension} yields the contrast
\begin{align*}
    \Gamma \tilde u^{\ext}_{\mathrm{Prod}}(p_0,p_1)
    = \tilde \lambda + \tilde \lambda_0 p_0 + \tilde \lambda_1 p_1 + \tilde \lambda_{01} p_0 p_1 .
\end{align*}
As discussed in Section~\ref{sec:non_unique_extension} the contrast is equivalent to assuming Assumption~\ref{ass:independence} in the context of Section~\ref{sec:setup}.
If $\tilde \lambda_{01}=0$, the induced decision rule can be replicated by a standard utility of the form $u(d;y_d)=\alpha_d +\beta_d y_d$. If $\tilde \lambda_{01}\neq 0$, no equivalent standard utility exists in general. However, under Assumptions~\ref{ass:decision_monotonicity}--\ref{ass:unique_crossing}, which imply $\max\{0, - \tilde \lambda_{01}\} \leq \tilde \lambda_1$ and $\tilde \lambda_0 \leq \min\{0, - \tilde \lambda_{01}\}$, one can still construct an equivalent extended standard utility under GM's procedure. For example, take $u^\ext(d;p_d)=\phi_d(p_d)$ with $\phi_1(p_1)=p_1$ and
\[
    \phi_0(p_0)
    =
    \begin{cases}
        -1,
        &
        \text{if }
        \tilde \lambda+\tilde \lambda_0 p_0
        +\tilde \lambda_1 p_1
        +\tilde \lambda_{01}p_0p_1>0
        \text{ for all }p_1\in [0, 1],
        \\[6pt]
        2,
        &
        \text{if }
        \tilde \lambda+\tilde \lambda_0 p_0
        +\tilde \lambda_1 p_1
        +\tilde \lambda_{01}p_0p_1<0
        \text{ for all }p_1\in [0, 1],
        \\[10pt]
        -\dfrac{\tilde \lambda+\tilde \lambda_0 p_0}
        {\tilde \lambda_1+\tilde \lambda_{01}p_0},
        &
        \text{otherwise.}
    \end{cases}
\]
Then, $\Gamma \tilde u^{\ext}_{\mathrm{Prod}}(p_0,p_1)>0$ if and only if $\phi_1(p_1)>\phi_0(p_0)$, so the two criteria induce the same recommendations.
\end{example}

\section{Mathematical Proofs}

\subsection{Proof of Theorem~\ref{thm:std}} \label{app:std}

We first show that the preference relation induced by a standard utility $u(d; y_d, \bx)$ satisfies Axioms~\ref{axm:completeness}--\ref{axm:irrelevance_of_untaken_decisions}. Since any standard utility is also a special case of counterfactual utilities on $\cZ$, Theorem~\ref{thm:vnm} implies that the induced preference relation satisfies Axioms~\ref{axm:completeness}--\ref{axm:continuity}. We will verify Axiom~\ref{axm:irrelevance_of_untaken_decisions}.  Let $P^d,Q^d \in \Delta(\cZ)$ satisfy the premise of Axiom~\ref{axm:irrelevance_of_untaken_decisions} with point mass at $D=d$ and $\bX=\bx$, and suppose $P(Y(d)=y)=Q(Y(d)=y)$ for all $y\in\cY$. Then, $P$ and $Q$ are degenerate in $\bX$, i.e., $P(\by,\bx') = P(\by)\,\delta_{\bx}(\bx')$ and $Q(\by,\bx') = Q(\by)\,\delta_{\bx}(\bx')$, and,
\begin{align*}
    V_P(d; u) &= \sum_{\by \in \cY^\cD} \sum_{\bx^\prime \in \cX} u(d; y_d, \bx^\prime) P(\by, \bx^\prime) \\
    &= \sum_{\by \in \cY^\cD} \sum_{\bx^\prime \in \cX} u(d; y_d, \bx^\prime) P(\by) \delta_{\bx}(\bx^\prime) \\
    &= \sum_{y_d \in \cY} u(d; y_d, \bx) P(y_d) \\
    &= \sum_{y_d \in \cY} u(d; y_d, \bx) Q(y_d) \\
    &= \sum_{\by \in \cY^\cD} \sum_{\bx^\prime \in \cX} u(d; y_d, \bx^\prime) Q(\by) \delta_{\bx}(\bx^\prime) \\
    &= \sum_{\by \in \cY^\cD} \sum_{\bx^\prime \in \cX} u(d; y_d, \bx^\prime) Q(\by, \bx^\prime) \\
    &= V_Q(d; u).
\end{align*}
Hence, we have $P^d \sim Q^d$, establishing Axiom~\ref{axm:irrelevance_of_untaken_decisions}.

Next, we show that any preference relation, which satisfies Axioms~\ref{axm:completeness}--\ref{axm:irrelevance_of_untaken_decisions}, admits a representation of a standard utility. By Theorem~\ref{thm:vnm}, Axioms~\ref{axm:completeness}--\ref{axm:continuity} imply the existence of a counterfactual utility $\tilde u(d;\by,\bx)$ such that
\begin{align}\label{eqn:std_proof_pref_rel}
    P^\pi \succsim Q^\rho \iff V_P(\pi;\tilde u) \ge V_Q(\rho;\tilde u).
\end{align}
We will show that under Axiom~\ref{axm:irrelevance_of_untaken_decisions}, $\tilde u$ must be a standard utility. Fix $d \in \cD$ and $\bx \in \cX$, and take $\by,\by^\prime \in \cY^{\cD}$ such that $y_d=y_d^\prime$. Consider the Dirac measures $\delta_{(d,\by,\bx)}$ and $\delta_{(d,\by',\bx)}$ in $\Delta(\cZ)$. By Axiom~\ref{axm:irrelevance_of_untaken_decisions},
\[
    \delta_{(d,\by,\bx)} \sim \delta_{(d,\by^\prime,\bx)}.
\]
Using the representation in Equation~\eqref{eqn:std_proof_pref_rel}, this implies
\[
    \tilde u(d;\by,\bx)
    = \E_{\delta_{(d,\by,\bx)}}[\tilde u(D;Y(0),\ldots,Y(K-1),\bX)]
    = \E_{\delta_{(d,\by^\prime,\bx)}}[\tilde u(D;Y(0),\ldots,Y(K-1),\bX)]
    = \tilde u(d;\by^\prime,\bx).
\]
Hence, for each fixed $(d,\bx)$, $\tilde u(d;\by,\bx)$ depends on $\by$ only through $y_d$. Define
\[
    \bar u(d;y_d,\bx) := \tilde u(d;\by,\bx),
\]
for any $\by \in \cY^{\cD}$ with $(\by)_d=y_d$  so that, for all $(d,\by,\bx)\in\cZ$, we have $\bar u(d;\by,\bx) =  \bar u(d;y_d,\bx)$.
By construction, $\bar u$ leaves Equation~\eqref{eqn:std_proof_pref_rel} unchanged, proving the existence of a standard utility representation.  Lastly, by Theorem~\ref{thm:vnm}, this standard utility representation is unique up to positive affine transformation. \qed

\subsection{Proof of Theorem~\ref{thm:add}}\label{app:add}

The proof proceeds similarly to that of Theorem~\ref{thm:std} presented above.
We first show that the preference relation induced by the utility of the form $\tilde u(d; \by, \bx) = \sum_{k \in \cD} u_k(d; y_k, \bx)$ satisfies Axioms~\ref{axm:completeness}--\ref{axm:continuity},~and~\ref{axm:indifference_to_correlation}.
Since any such utility is a special case of counterfactual utilities on $\cZ$, Theorem~\ref{thm:vnm} implies that the induced preference relation satisfies Axioms~\ref{axm:completeness}--\ref{axm:continuity}, leaving us only to verify Axiom~\ref{axm:indifference_to_correlation}.

Let us begin by assuming that $P^d,Q^{d} \in \Delta(\cZ)$ satisfy Axiom~\ref{axm:indifference_to_correlation} with point mass at $D = d$ and $\bX = \bx$, and $P(Y(k) = y) = Q(Y(k) = y)$ for all $k \in \cD, y \in \cY$. This implies $P(\by,\bx') = P(\by)\,\delta_{\bx}(\bx')$ and $Q(\by,\bx') = Q(\by)\,\delta_{\bx}(\bx')$. Thus,
\begin{align*}
    V_P(d; \tilde u) &= \sum_{\by \in \cY^\cD} \sum_{\bx^\prime \in \cX} \tilde u(d; \by, \bx^\prime) P(\by, \bx^\prime) \\
    &= \sum_{\by \in \cY^\cD} \sum_{\bx^\prime \in \cX} \tilde u(d; \by, \bx^\prime) P(\by) \delta_{\bx}(\bx^\prime) \\
    &= \sum_{\by \in \cY^\cD} \tilde u(d; \by, \bx) P(\by)  \\
    &= \sum_{\by \in \cY^\cD}\sum_{k \in \cD} u_k(d; y_k, \bx) P(\by)  \\
    &= \sum_{k \in \cD} \sum_{y_k \in \cY} u_k(d; y_k, \bx) P(Y(k) = y_k)  \\
    &= \sum_{k \in \cD} \sum_{y_k \in \cY} u_k(d; y_k, \bx) Q(Y(k) = y_k)  \\
    &= \sum_{\by \in \cY^\cD}\sum_{k \in \cD} u_k(d; y_k, \bx) Q(\by)  \\
    &= \sum_{\by \in \cY^\cD} \tilde u(d; \by, \bx) Q(\by)  \\
    &= \sum_{\by \in \cY^\cD} \sum_{\bx^\prime \in \cX} \tilde u(d; \by, \bx^\prime) Q(\by) \delta_{\bx}(\bx^\prime) \\
    &= \sum_{\by \in \cY^\cD} \sum_{\bx^\prime \in \cX} \tilde u(d; \by, \bx^\prime) Q(\by, \bx^\prime) \\
    &= V_Q(d; \tilde u).
\end{align*}
Hence, $P^d \sim Q^{d}$, satisfying Axiom~\ref{axm:indifference_to_correlation}.

We next show that any preference relation, which satisfies Axioms~\ref{axm:completeness}--\ref{axm:continuity},~and~\ref{axm:indifference_to_correlation}, admits a representation of the form $\tilde u(d; \by, \bx) = \sum_{k \in \cD} u_k(d; y_k, \bx)$. By Theorem~\ref{thm:vnm}, Axioms~\ref{axm:completeness}--\ref{axm:continuity} imply the existence of a counterfactual utility $\tilde u(d;\by,\bx)$ such that
\begin{align}\label{eqn:add_proof_pref_rel}
    P^\pi \succsim Q^\rho \iff V_P(\pi;\tilde u) \ge V_Q(\rho;\tilde u).
\end{align}
It remains to show that under Axiom~\ref{axm:indifference_to_correlation}, $\tilde u$ must be of the form $\sum_{k \in \cD} u_k(d; y_k, \bx)$.

Fix $(d,\bx)\in\cD\times\cX$ and define the restriction $\tilde u^{(d,\bx)}:\cY^{\cD}\to\R$ by $\tilde u^{(d, \bx)}(\by) = \tilde u(d; \by, \bx)$. Let $P,Q \in \Delta(\cY^{\cD})$ have the same marginals, and define $P^d = \delta_d \cdot P \cdot \delta_{\bx}$ and $Q^d = \delta_d \cdot Q \cdot \delta_{\bx}$ as elements of $\Delta(\cZ)$. By Axiom~\ref{axm:indifference_to_correlation}, $P^d \sim Q^d$. Hence, by Equation~\eqref{eqn:add_proof_pref_rel}, we have,
\begin{align*}
    \E_P[\tilde u^{(d, \bx)}(Y(0), \ldots, Y(K-1))] = \E_Q[\tilde u^{(d, \bx)}(Y(0), \ldots, Y(K-1))].
\end{align*}
Therefore, the conditions of Lemma~\ref{lem:Fishburn} below hold, implying the existence of functions
$\{u_k^{(d,\bx)}\}_{k\in\cD}$ such that $\tilde u^{(d, \bx)}(\by) = \sum_{k \in \cD} u_k^{(d, \bx)}(y_k)$. Define $u_k:\cD\times\cY\times\cX\to\R$ by $u_k(d;y,\bx):=u_k^{(d,\bx)}(y)$. Then
\begin{align*}
    \tilde u(d; \by, \bx) = \sum_{k \in \cD} u_k(d; y_k, \bx),
\end{align*}
which establishes the additive representation. The uniqueness of this representation up to positive affine transformation follows from the same argument used in the proof of Theorem~\ref{thm:std}. \qed

\subsection{Additivity Lemma}

The proof of Theorem~\ref{thm:add} relies on the following lemma which is implied by the results of \citet[Chapter~11]{fishburn1970utility}. For completeness, we provide a proof of this lemma, which is similar to the proof of Theorem~11.1 of \cite{fishburn1970utility}.
\begin{lemma}[Additive Utilities]\label{lem:Fishburn}
    Consider a utility function $\tilde u:\cY^{\cD}\to\R$. Suppose that for all $P,Q \in \Delta(\cY^{\cD})$ satisfying
    \[
        P(Y(k)=y)=Q(Y(k)=y), \qquad \forall k\in\cD,\ \forall y\in\cY,
    \]
    we have
    \[
        \E_P[\tilde u(Y(0),\ldots,Y(K-1))]
        =
        \E_Q[\tilde u(Y(0),\ldots,Y(K-1))].
    \]
    Then, there exist functions $u_k:\cY\to\R$  for $k \in \cD$ such that
    \[
        \tilde u(\by) = \sum_{k\in\cD} u_k(y_k),
        \qquad \forall \by\in\cY^{\cD}.
    \]
\end{lemma}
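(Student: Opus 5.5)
The plan is to prove Lemma~\ref{lem:Fishburn} by an explicit anchoring construction combined with induction on the number of coordinates. Fix a reference point $\by^\ast=(y_0^\ast,\ldots,y_{K-1}^\ast)\in\cY^{\cD}$ (say $y_k^\ast=0$ for all $k$). The candidate decomposition is the ``first-order'' expansion around $\by^\ast$:
\[
    u_0(y_0) := \tilde u(y_0, y_1^\ast,\ldots,y_{K-1}^\ast), \qquad
    u_k(y_k) := \tilde u(y_0^\ast,\ldots,y_k,\ldots,y_{K-1}^\ast) - \tilde u(\by^\ast), \quad k\ge 1.
\]
We must show $\tilde u(\by)=\sum_k u_k(y_k)$ for every $\by$. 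The engine is the basic two-point exchange. For $S\subseteq\cD$ and vectors $\by,\by'$, let $\by_S\by'_{-S}$ denote the vector taking coordinates in $S$ from $\by$ and the rest from $\by'$. The measures $P=\tfrac12\delta_{\by}+\tfrac12\delta_{\by'}$ and $Q=\tfrac12\delta_{\by_S\by'_{-S}}+\tfrac12\delta_{\by'_S\by_{-S}}$ have identical coordinate marginals: each coordinate $k$ takes the values $y_k$ and $y'_k$ with probability $1/2$ each. The hypothesis therefore yields the exchange identity
\[
    \tilde u(\by)+\tilde u(\by') = \tilde u(\by_S\by'_{-S}) + \tilde u(\by'_S\by_{-S}).
\]

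Next, we apply the exchange with $\by'=\by^\ast$ and $S=\{0,\ldots,j-1\}$, inducting on $j$. Write $\by^{(j)}$ for the vector agreeing with $\by$ in coordinates $0,\ldots,j-1$ and with $\by^\ast$ elsewhere. Taking $\by^{(j+1)}$ and $\by^\ast$ and swapping the block $\{0,\ldots,j-1\}$ gives
\[
    \tilde u(\by^{(j+1)})+\tilde u(\by^\ast)=\tilde u(\by^{(j)})+\tilde u(y_0^\ast,\ldots,y_j,\ldots,y_{K-1}^\ast),
\]
that is, $\tilde u(\by^{(j+1)})=\tilde u(\by^{(j)})+u_j(y_j)$. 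Starting from $\tilde u(\by^{(1)})=u_0(y_0)$ and iterating up to $j=K$ yields $\tilde u(\by)=\tilde u(\by^{(K)})=\sum_{k}u_k(y_k)$, which is the claim.

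The main point requiring care is confirming that the hypothesis is actually available for these particular $P,Q$. Both are genuine elements of $\Delta(\cY^{\cD})$ whose coordinatewise marginals coincide; this is immediate since the swap only permutes which atom carries which coordinate value. We should also handle degenerate cases where some atoms coincide, for instance $y_j=y_j^\ast$. In such cases the identity holds trivially, or the measures remain valid with merged masses, so no separate argument is needed. Everything else is bookkeeping. The finiteness of $\cY$ and $\cD$ ensures all sums are finite, although only finitely supported measures are used in any case.
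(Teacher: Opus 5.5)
Your proposal is correct and is essentially the paper's proof. The paper uses the same baseline and the same two-point measures; its $P_k$, $Q_k$ are exactly your exchange with $S=\{0,\ldots,k-1\}$ and $\by'=\by^\ast$. It then sums the resulting identities and telescopes, whereas you iterate them one step at a time. You also fix the normalization $u_k(y_k^\ast)=0$ for $k\ge 1$, which the paper leaves arbitrary.
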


\begin{proof}
    Fix a baseline $\by^\prime \in \cY^\cD$. Assign arbitrary values to $u_0(y_0^\prime), \ldots, u_{K-1}(y_{K-1}^\prime)$ such that the following equality holds,
    \begin{align}\label{eqn:lem_fishburn_initial_value}
        \tilde u(\by^\prime) = \sum_{k \in \cD} u_k(y_k^\prime).
    \end{align}
    For each $k \in \cD$ and $a \in \cY$, define the function $u_k: \cY \to \R$ by
    \begin{align}\label{eqn:add_proof_uk_def}
        u_k(a) := \tilde u(y_0^\prime, \ldots, y_{k-1}^\prime, a, y_{k+1}^\prime, \ldots, y_{K-1}^\prime) - \sum_{l \neq k} u_{l}(y_{l}^\prime).
    \end{align}
   Now, fix any $\by \in \cY^\cD$. For each $k \in \cD$ define the probability measures
    \begin{align*}
        P_k &:= \frac{1}{2} \; \delta_{(y_0, \ldots, y_{k-1}, y_k^\prime, \ldots, y_{K-1}^\prime)} + \frac{1}{2} \; \delta_{(y_0^\prime, \ldots, y_{k-1}^\prime, y_{k}, y_{k+1}^\prime, \ldots, y_{K-1}^\prime)},
        \\
        Q_k &:= \frac{1}{2} \; \delta_{(y_0, \ldots, y_{k},y_{k+1}^\prime, \ldots, y_{K-1}^\prime)} + \frac{1}{2} \; \delta_{(\by^\prime)}.
    \end{align*}
    Since $P_k, Q_k \in \Delta(\cY^\cD)$ have the same marginals, it follows by assumption that $E_{P_k}[\tilde u(Y(0), \ldots, Y(K-1)] = E_{Q_k}[\tilde u(Y(0), \ldots, Y(K-1)]$. This implies the identity
    \begin{align*}
         &\tilde u(y_0, \ldots, y_{k-1}, y_k^\prime, \ldots, y_{K-1}^\prime) +  \tilde u(y_0^\prime, \ldots, y_{k-1}^\prime, y_{k}, y_{k+1}^\prime, \ldots, y_{K-1}^\prime)
         \\
         &= \tilde u(y_0, \ldots, y_{k},y_{k+1}^\prime, \ldots, y_{K-1}^\prime) + \tilde u(\by^\prime),
    \end{align*}
    or equivalently,
    \begin{align*}
        &\tilde u(y_0^\prime, \ldots, y_{k-1}^\prime, y_{k}, y_{k+1}^\prime, \ldots, y_{K-1}^\prime)
         \\
         &= \lt[\tilde u(y_0, \ldots, y_{k},y_{k+1}^\prime, \ldots, y_{K-1}^\prime) - \tilde u(y_0, \ldots, y_{k-1}, y_k^\prime, \ldots, y_{K-1}^\prime)\rt] + \tilde u(\by^\prime).
    \end{align*}
    Summing over $k = 0, \ldots, K-1$ telescopes,
    \begin{align*}
        \sum_{k = 0}^{K-1} \tilde u(y_0^\prime, \ldots, y_{k-1}^\prime, y_{k}, y_{k+1}^\prime, \ldots, y_{K-1}^\prime) = \tilde u(\by) + (K-1) \tilde u(\by^\prime).
    \end{align*}
    Inserting the left side into Equation~\eqref{eqn:add_proof_uk_def} and recalling Equation~\eqref{eqn:lem_fishburn_initial_value} yields,
    \begin{align*}
        \tilde u(\by) + (K-1) \tilde u(\by^\prime) &= \sum_{k \in \cD} \tilde u(y_0^\prime, \ldots, y_{k-1}^\prime, y_{k}, y_{k+1}^\prime, \ldots, y_{K-1}^\prime)
        \\
        &= \sum_{k \in \cD} u_k(y_k) + \sum_{k \in \cD} \sum_{d \neq k} u_{d}(y_{d}^\prime)
        \\
        &= \sum_{k \in \cD}  u_k(y_k) + (K-1) \sum_{k \in \cD} u_k(y_k^\prime)
        \\
        &= \sum_{k \in \cD}  u_k(y_k) + (K-1) \tilde u(\by^\prime).
    \end{align*}
    Canceling $(K-1) \tilde u(\by^\prime)$ on both sides,
    \begin{align*}
        \tilde u(\by) = \sum_{k \in \cD}  u_k(y_k).
    \end{align*}
    Since $\by \in \cY^\cD$ was arbitrary, this completes the proof.
\end{proof}

\subsection{Proof of Theorem~\ref{thm:outcome}}\label{app:outcome}

The proof proceeds similarly to that of Theorem~\ref{thm:std} presented above.
We first show that the preference relation induced by the utility of the form $u(y_d)$ satisfies Axioms~\ref{axm:completeness}--\ref{axm:continuity},~and~\ref{axm:outcome_sufficiency}.
Since any such utility is a special case of counterfactual utilities on $\cZ$, Theorem~\ref{thm:vnm} implies that the induced preference relation satisfies Axioms~\ref{axm:completeness}--\ref{axm:continuity}, leaving us only to verify Axiom~\ref{axm:outcome_sufficiency}.

Let us begin by assuming that $P^d,Q^{d^\prime} \in \Delta(\cZ)$ satisfies Axiom~\ref{axm:outcome_sufficiency}. That is, for $d, d^\prime \in \cD$, $\bx, \bx^\prime \in \cX$ we have
$P(\by,\bv) = P(\by)\,\delta_{\bx}(\bv)$ and $Q(\by,\bv) = Q(\by)\,\delta_{\bx^\prime}(\bv)$ and $P(Y(d) = y) = Q(Y(d^\prime) = y), y \in \cY$. Thus,
\begin{align*}
    V_P(d; u) &= \sum_{\by \in \cY^\cD} \sum_{\bv \in \cX} u(y_d) P(\by, \bv)
    \\
    &= \sum_{\by \in \cY^\cD} \sum_{\bv \in \cX} u(y_d) P(\by) \delta_{\bx}(\bv)
    \\
    &= \sum_{y \in \cY} u(y) P(Y(d) = y)
    \\
    &= \sum_{y \in \cY} u(y) Q(Y(d^\prime) = y)
    \\
    &= \sum_{\by \in \cY^\cD} \sum_{\bv \in \cX} u(y_{d^\prime}) Q(\by) \delta_{\bx^\prime}(\bv)
    \\
    &= \sum_{\by \in \cY^\cD} \sum_{\bv \in \cX} u(y_{d^\prime}) Q(\by, \bv)
    \\
    &= V_Q(d^\prime; u).
\end{align*}
Hence, $P^d \sim Q^{d^\prime}$, satisfying Axiom~\ref{axm:outcome_sufficiency}.

We next show that any preference relation, which satisfies Axioms~\ref{axm:completeness}--\ref{axm:continuity},~and~\ref{axm:outcome_sufficiency}, admits a representation of the form $u(y_d)$. By Theorem~\ref{thm:vnm}, Axioms~\ref{axm:completeness}--\ref{axm:continuity} imply the existence of a counterfactual utility $\tilde u(d;\by,\bx)$ such that
\begin{align}\label{eqn:outcome_proof_pref_rel}
    P^\pi \succsim Q^\rho \iff V_P(\pi;\tilde u) \ge V_Q(\rho;\tilde u).
\end{align}
It remains to show that under Axiom~\ref{axm:outcome_sufficiency}, $\tilde u$ must be of the form $u(y_d)$.

Take any $(d, \by, \bx), (d^\prime, \by^\prime, \bx^\prime) \in \cZ$ such that $y_d = y^\prime_{d^\prime}$. Consider the Dirac measures $\delta_{(d,\by,\bx)}$ and $\delta_{(d^\prime,\by^\prime,\bx^\prime)}$ in $\Delta(\cZ)$. By Axiom~\ref{axm:outcome_sufficiency},
$\delta_{(d,\by,\bx)} \sim \delta_{(d^\prime,\by^\prime,\bx^\prime)}$.
Using the representation in Equation~\eqref{eqn:outcome_proof_pref_rel}, this implies,
\begin{equation}\label{eqn:outcome_proof_equal}
\begin{aligned}
         \tilde u(d;\by,\bx)
        & = \E_{\delta_{(d,\by,\bx)}}[\tilde u(D;Y(0),\ldots,Y(K-1),\bX)] \\
        & = \E_{\delta_{(d^\prime,\by^\prime,\bx^\prime)}}[\tilde u(D;Y(0),\ldots,Y(K-1),\bX)]
        \\
        & = \tilde u(d^\prime,\by^\prime,\bx^\prime).
\end{aligned}
\end{equation}
Fix $y \in \cY$. Choose any $(d,\by,\bx)\in\cZ$ with $(\by)_d=y$ and define
\[
        \bar u(y) := \tilde u(d;\by,\bx),
\]
so that, for all $(d,\by,\bx)\in\cZ$, we have $\tilde u(d;\by,\bx)=\bar u(y_d)$. This is well defined because if $(d^\prime,\by^\prime,\bx^\prime)\in\cZ$ also satisfies $(\by^\prime)_{d^\prime}=y$, then, Equation~\eqref{eqn:outcome_proof_equal} implies $\tilde u(d;\by,\bx)=\tilde u(d^\prime;\by^\prime,\bx^\prime)$.
By construction, $\bar u$ leaves Equation~\eqref{eqn:outcome_proof_pref_rel} unchanged, proving the existence of  representation $u(y_d)$. The uniqueness of this utility up to positive affine transformation follows from the same arguments used in the proof of Theorem~\ref{thm:std}. \qed

\subsection{Truthful Acts Lemma}
\begin{lemma}[Truthful acts]\label{lem:truthful_acts}
Let $\cS$ denote the finite set of states of the world, let $\cD$ denote the finite set of decisions, and let $\cZ= \cD \times \cS$ be the outcome space. We call an act $f:\cS \to \Delta(\cZ)$ \emph{truthful} if
\[
    f(s)(\cD \times \{s\}) = 1.
\]
Then every truthful act is of the form
\[
    f_\pi(s)
    =
    \sum_{d\in\cD} \pi(d;s) \delta_{(d,s)},
\]
for a unique policy $\pi:\cS\to\Delta(\cD)$. 
\end{lemma}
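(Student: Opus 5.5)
To prove the Truthful Acts lemma, I will construct $\pi$ directly from $f$ and then verify the representation and its uniqueness. Fix a truthful act $f$ and a state $s\in\cS$. Since $\cZ=\cD\times\cS$ is finite, the measure $f(s)$ is determined by its point masses $f(s)(\{(d,s')\})$. Truthfulness gives $f(s)(\cD\times\{s\})=1$, so $f(s)$ assigns zero mass to $\cZ\setminus(\cD\times\{s\})$. Hence $f(s)(\{(d,s')\})=0$ whenever $s'\neq s$.

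I then define
\[
\pi(d;s):=f(s)(\{(d,s)\}).
\]
These values are nonnegative. They sum over $d\in\cD$ to $f(s)(\cD\times\{s\})=1$, so $\pi(\cdot;s)\in\Delta(\cD)$ and $\pi$ is a policy $\cS\to\Delta(\cD)$.

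To check the representation, I compare both measures on singletons $\{(d',s')\}$; on a finite space this suffices. The measure $\sum_{d}\pi(d;s)\delta_{(d,s)}$ assigns $\pi(d';s)$ to $(d',s)$ and $0$ to $(d',s')$ for $s'\neq s$. This matches $f(s)$ by the previous paragraph, so $f=f_\pi$.

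For uniqueness, suppose $f_\pi=f_\rho$. Evaluating both at the singleton $\{(d,s)\}$ gives $\pi(d;s)=f_\pi(s)(\{(d,s)\})=f_\rho(s)(\{(d,s)\})=\rho(d;s)$ for every $d$ and $s$, so $\pi=\rho$.

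Conversely, each $f_\pi$ is truthful, since it is supported on $\cD\times\{s\}$, so the correspondence between policies and truthful acts is a bijection. No step is a real obstacle. The only point requiring care is that truthfulness forces zero mass off the slice $\cD\times\{s\}$, which follows from total mass one.
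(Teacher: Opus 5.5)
Your proof is correct and follows essentially the same route as the paper. You define $\pi(d;s) := f(s)(\{(d,s)\})$, use truthfulness to get total mass one, verify the representation, and obtain uniqueness by evaluating on singletons. You also make explicit that truthfulness forces $f(s)$ to put zero mass off $\cD\times\{s\}$, which the paper uses only implicitly when it identifies $\sum_{d}\pi(d;s)\delta_{(d,s)}(A)$ with $f(s)(A)$ for arbitrary $A\subseteq\cZ$.
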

Truthfulness encodes two conditions. First, a realized consequence $(d,s)\in\cZ$ records both the decision $d$ that was made and the state $s$ in which it was made. Second, decisions do not alter states. Both conditions hold when the unit-level stratum is taken as the state as the decision is made for that unit, and it cannot alter the unit's potential outcomes or covariates.

\begin{proof}
    Suppose $f$ is truthful. Define $\pi(d; s) = f(s)(\{(d, s)\}) \geq 0$ and observe
    \[
        \sum_{d \in \cD} \pi(d; s) = f(s)(\cD \times \{s\}) = 1.
    \]
    Consequently $\pi(\cdot; s) \in \Delta(\cD)$. Next, for $A \subseteq \cZ$,
    \begin{align*}
        \sum_{d\in\cD} \pi(d;s) \delta_{(d,s)}(A) = \sum_{d\in\cD} f(s)(\{(d, s)\})\delta_{(d,s)}(A) = f(s)(A),
    \end{align*}
    which proves $f(s) = \sum_{d\in\cD} \pi(d;s) \delta_{(d,s)}$. Uniqueness follows immediately by varying $A_z = \{z\}, z \in \cZ$.
\end{proof}

\subsection{Proof of Theorem~\ref{thm:gm_reduction_to_std}}
\label{app:gm_reduction_to_atd}

The proof follows from the following lemma whose proof is given below.
\begin{lemma}\label{lem:equivalence}
    Suppose that Assumptions~\ref{ass:bounded_means}--\ref{ass:unique_crossing} hold on $\Gamma \tilde u^\ext(p_0, p_1)$. Then, there exists non-decreasing functions $\phi_0$ and $\phi_1$ such that for all $(p_0, p_1) \in I^2$,
    \begin{align*}
         \Gamma \tilde u^\ext(p_0, p_1) > 0 \iff \phi_1(p_1) - \phi_0(p_0) > 0, \quad \Gamma \tilde u^\ext(p_0, p_1) = 0 \iff \phi_1(p_1) - \phi_0(p_0) = 0.
    \end{align*}
    Moreover, one can choose $\phi_1$ to be the identity, i.e., $\phi_1(p) = p$.
\end{lemma}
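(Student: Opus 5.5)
The plan is to build $\phi_0$ explicitly from the zero-crossing threshold of $\Gamma\tilde u^\ext(p_0,\cdot)$ and to take $\phi_1(p)=p$. Fix $p_0\in I$ and write $g_{p_0}(p_1):=\Gamma\tilde u^\ext(p_0,p_1)$. By Assumption~\ref{ass:decision_monotonicity}, $g_{p_0}$ is non-decreasing on $I$. Hence the sets $N(p_0)=\{p_1: g_{p_0}(p_1)<0\}$, $Z(p_0)=\{p_1: g_{p_0}(p_1)=0\}$ and $S(p_0)=\{p_1: g_{p_0}(p_1)>0\}$ are consecutive intervals, ordered left to right. By Assumption~\ref{ass:unique_crossing}(a), $Z(p_0)$ has at most one point. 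There are three cases.
\begin{enumerate}
\item If $S(p_0)=\emptyset$ and $Z(p_0)=\emptyset$, so the sign is always negative, set $\phi_0(p_0):=U+1$.
\item If $N(p_0)=\emptyset$ and $Z(p_0)=\emptyset$, so the sign is always positive, set $\phi_0(p_0):=L-1$.
\item Otherwise define $\phi_0(p_0):=\tau(p_0)$. If the root exists, $\tau(p_0)$ is the root. If $N$ and $S$ are both nonempty and no root exists, Assumption~\ref{ass:unique_crossing}(b) says this cannot happen, because it would produce a root strictly between a negative point and a positive point.
\end{enumerate}

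Next I verify the sign equivalences. Consider the case with a root $\tau(p_0)$. Since $Z(p_0)=\{\tau(p_0)\}$ and $g_{p_0}$ is monotone, $g_{p_0}(p_1)>0$ exactly when $p_1>\tau(p_0)$, and $g_{p_0}(p_1)=0$ exactly when $p_1=\tau(p_0)$. These are precisely the conditions $\phi_1(p_1)-\phi_0(p_0)>0$ and $\phi_1(p_1)-\phi_0(p_0)=0$. The two constant-sign cases are immediate, since $p_1\in[L,U]$ lies strictly below $U+1$ and strictly above $L-1$.

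There is one subtle case: only one of $N(p_0)$ and $S(p_0)$ is nonempty, and there is a root at an endpoint. For example, $g_{p_0}\ge 0$ everywhere with $g_{p_0}(L)=0$. Setting $\phi_0(p_0)=\tau(p_0)=L$ still works, because the root is unique, so $g_{p_0}>0$ on $(L,U]$. The defining rule in case 3 therefore covers this without change.

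The remaining and main step is to show that $\phi_0$ is non-decreasing, using the second half of Assumption~\ref{ass:decision_monotonicity}. Let $p_0<p_0'$. Since $\Gamma\tilde u^\ext$ is non-increasing in $p_0$, $g_{p_0'}\le g_{p_0}$ pointwise, so $S(p_0')\subseteq S(p_0)$ and $N(p_0)\subseteq N(p_0')$. I will encode each $\phi_0(p_0)$ as the cut point separating the nonpositive region from the positive region in $p_1$, extended by $L-1$ or $U+1$ at the extremes. Concretely, $\phi_0(p_0)$ is the unique value $c$ with $\{p_1>c\}\cap I=S(p_0)$ and $\{p_1=c\}\cap I=Z(p_0)$. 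The inclusions then force the cut to move weakly right.

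The delicate point is a constant-sign case sitting next to a root case. Suppose $p_0$ is in the always-positive case, with $\phi_0(p_0)=L-1$, and $p_0'$ has a root. I must check that $L-1\le\tau(p_0')$, and this holds trivially. The only way monotonicity could fail is a comparison between two root cases in the wrong order, and that is excluded by $S(p_0')\subseteq S(p_0)$, which gives $\tau(p_0')\ge\tau(p_0)$. The other adjacent pairings are handled by a short case check, so I expect this bookkeeping to be the only real obstacle.

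With $\phi_0$ non-decreasing and $\phi_1$ the identity, $u^\ext(d;p_d)=\phi_d(p_d)$ satisfies $\Gamma u^\ext(p_0,p_1)=\phi_1(p_1)-\phi_0(p_0)$. The sign equivalences above then give both Lemma~\ref{lem:equivalence} and Theorem~\ref{thm:gm_reduction_to_std}.
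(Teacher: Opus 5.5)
Your proposal is correct and matches the paper's proof: the same choice $\phi_1(p)=p$, the same three-way definition of $\phi_0$ (namely $L-1$, $U+1$, or the unique root), and the same two verifications; your nested-set monotonicity argument via $S(p_0')\subseteq S(p_0)$ and $N(p_0)\subseteq N(p_0')$ repackages the paper's four-case check and its sandwich inequality. One thing to tighten: monotonicity can fail in more than the root-versus-root pairing you single out, since the pairings (root, always positive), (always negative, root) and (always negative, always positive) must also be excluded, and an always-negative $p_0$ followed by a $p_0'$ whose root is exactly $U$ is ruled out only by $N(p_0)\subseteq N(p_0')$, because $S$ is empty in both cases.
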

\begin{proof}
    For the ease of notation, we denote $\Gamma \tilde u^\ext(p_0, p_1)$ by $\Gamma \tilde u(p_0, p_1)$. Let $L < U$ be the bounds from Assumption~\ref{ass:bounded_means}.
    Set $\phi_1(p_1) = p_1$ and define the function $\phi_0: I \to [L-1, U+1]$ by
    \begin{align*}
        \phi_0(p_0) =
        \begin{cases}
             L-1 \ \ &\text{if } \Gamma \tilde u(p_0, p_1) > 0 \text{ for all } p_1\in I,
             \\
             U+1 \ \ &\text{if } \Gamma \tilde u(p_0, p_1) < 0 \text{ for all } p_1 \in I,
             \\
             \text{unique } \tau(p_0) \in I \text{ such that } \Gamma \tilde u(p_0, \tau(p_0)) = 0 \ \ & \text{otherwise}.
        \end{cases}
    \end{align*}
    Note that $\phi_0$ is well defined by Assumptions~\ref{ass:bounded_means}--\ref{ass:unique_crossing}. Since $\phi_1$ is non-decreasing, it remains to show that $\phi_0$ is non-decreasing and that $\mathrm{sign}(\Gamma \tilde u(p_0, p_1)) = \mathrm{sign}(\phi_1(p_1) - \phi_0(p_0))$.

    We first show that $\phi_0$ is non-decreasing. Take $p_0' < p_0''$, we need to show that $\phi_0(p_0') \leq \phi_0(p_0'')$. We consider the four cases.
\begin{enumerate}
    \item Suppose $\phi_0(p_0') = U+1$.
    Then, by definition $\Gamma \tilde u(p_0', p_1) < 0$ for all $p_1 \in I$.  By Assumption~\ref{ass:decision_monotonicity}, $p_0 \to \Gamma \tilde u(p_0, p_1)$ is non-increasing, implying $\Gamma \tilde u(p_0'', p_1) \leq \Gamma \tilde u(p_0', p_1) < 0$. Hence, $\phi_0(p_0'') = U + 1$ and thus $\phi_0(p_0') = \phi_0(p_0'')$.

    \item Suppose $\phi_0(p_0') = L - 1$.
    Since $\phi_0(p_0) \geq L-1$ for all $p_0 \in I$, it follows that $\phi_0(p_0') \leq \phi_0(p_0'')$.

    \item Suppose $\phi_0(p_0') \in [L, U], \phi_0(p_0'') \in \{L-1, U+1\}$.
    If $\phi_0(p_0'') = U+1$ then trivially $\phi_0(p_0') \leq \phi_0(p_0'')$. Now suppose $\phi_0(p_0'') = L-1$, then by definition $\Gamma \tilde u(p_0'', p_1) > 0$ for all $p_1 \in I$. By Assumption~\ref{ass:decision_monotonicity} and $p_0' < p_0''$, we must have $\Gamma \tilde u(p_0', p_1) \geq \Gamma \tilde u(p_0'', p_1) > 0$. Hence, $\Gamma \tilde u(p_0', p_1) > 0$ for all $p_1 \in I$ which implies $\phi_0(p_0') = L-1$, contradicting $\phi_0(p_0') \in [L, U]$.
    \item Suppose $\phi_0(p_0')$ and $\phi_0(p_0'') \in [L, U]$.
    Note that since $\phi_0(p_0'), \phi_0(p_0'') \in [L, U]$, we have $\Gamma \tilde u(p_0', \phi_0(p_0')) = 0$ and $\Gamma \tilde u(p_0'', \phi_0(p_0'')) = 0$. Suppose for contradiction that $\phi_0(p_0') > \phi_0(p_0'')$. Assumption~\ref{ass:decision_monotonicity} further implies
    \[
        0 = \Gamma \tilde u(p_0'', \phi_0(p_0'')) \leq \Gamma \tilde u(p_0'', \phi_0(p_0')) \leq \Gamma \tilde u(p_0', \phi_0(p_0')) = 0
    \]
    such that $\Gamma \tilde u(p_0'', \phi_0(p_0')) = 0$. By Assumption~\ref{ass:unique_crossing}(a) this imposes $\phi_0(p_0') = \phi_0(p_0'')$, yielding a contradiction.
\end{enumerate}
    This proves that $\phi_0$ is non-decreasing.
    Finally, we show that $\Gamma \tilde u(p_0, p_1)$ and $\phi_1(p_1) - \phi_0(p_0)$ have the same sign. Fix $p_0 \in I$. We consider the following cases.
\begin{enumerate}
    \item Suppose $\phi_0(p_0) = L-1$.
    This directly implies $\phi_1(p_1) - \phi_0(p_0) = p_1 - (L-1) > 0$ for all $p_1 \in I$. Moreover, by construction of $\phi_0$, we must have $\Gamma \tilde u(p_0, p_1) > 0$ for all $p_1\in I$.

    \item Suppose $\phi_0(p_0) = U+1$.
    In this case $\phi_1(p_1) - \phi_0(p_0) = p_1 - (U+1) < 0$ for all $p_1 \in I$. As in the previous case, the construction of $\phi_0$ implies $\Gamma \tilde u(p_0, p_1) < 0$ for all $p_1\in I$.

    \item Suppose $\phi_0(p_0) \in [L, U]$.
    By construction, $\Gamma \tilde u(p_0, \phi_0(p_0)) = 0$. If $p_1 < \phi_0(p_0)$, then $\phi_1(p_1) - \phi_0(p_0) < 0$. By Assumption~\ref{ass:decision_monotonicity}, we must have $\Gamma \tilde u(p_0, p_1) \leq \Gamma \tilde u(p_0, \phi_0(p_0)) = 0$. If $\Gamma \tilde u(p_0, p_1) = 0$, this would give a second root, which violates Assumption~\ref{ass:unique_crossing}. Hence, $\Gamma \tilde u(p_0, p_1) < 0$. The same argument applies to the case $p_1 > \phi_0(p_0)$. If $p_1 = \phi_0(p_0)$, we trivially have $\phi_1(p_1) - \phi_0(p_0) = p_1 - \phi_0(p_0) = 0$ and $\Gamma \tilde u(p_0, p_1) = \Gamma \tilde u(p_0, \phi_0(p_0)) = 0$.
\end{enumerate}
    We have shown that regardless of the value of $\phi_0(p_0)$, the signs of $\Gamma \tilde u(p_0, p_1)$ and $\phi_1(p_1) - \phi_0(p_0)$ agree. This concludes the proof.
\end{proof}

\end{document}